\RequirePackage{fix-cm}
\documentclass[smallextended,numbook]{svjour3mod}       %
\smartqed  
\usepackage{graphicx}
\usepackage[utf8]{inputenc}
\usepackage{amsmath}
\usepackage{amssymb}
\usepackage{cite}
\usepackage{bbm,mathtools}
\usepackage[dvipsnames]{xcolor}
\usepackage{soul}
\usepackage{authblk,url}

\newcommand{\BT}{\mathrm{BT}}
\newcommand{\difd}{\mathrm{d}}
\newcommand{\expe}{\mathrm{e}}
\newcommand{\bzero}{\bs{0}}
\newcommand{\mprior}{\bs{\mu}_\mathrm{pr}}
\newcommand{\mpos}{\bs{\mu}_\mathrm{pos}}
\newcommand{\mposBT}{\bs{\mu}_{\mathrm{pos},\mathrm{BT}}}
\newcommand{\hmpos}{\widehat{\bs{\mu}}_\mathrm{pos}}
\newcommand{\gmeas}{\bs{\Gamma}_{\bs{\epsilon}}}
\newcommand{\gmeasi}{\bs{\Gamma}_{\bs{\epsilon}_i}}

\newcommand{\gmi}{\bs{\Gamma}_{\bs{\epsilon}_i}}
\newcommand{\gprior}{\bs{\Gamma}_{\!\mathrm{pr}}}
\newcommand{\gobs}{\bs{\Gamma}_{\!\mathrm{obs}}}
\newcommand{\gobsdyn}{\bs{\Gamma}_{\mathrm{tot}}}

\newcommand{\gpos}{\bs{\Gamma}_{\!\mathrm{pos}}}

\newcommand{\gposBT}{\bs{\Gamma}_{\!\mathrm{pos},\mathrm{BT}}}
\newcommand{\hgpos}{
\widehat{\bs{\Gamma}}_{\!\mathrm{pos}}}

\newcommand{\bPinf}{\bs{P}_\infty}

\usepackage[left=1.25in,right=1.25in,top=1in]{geometry}
\usepackage[footnotesize,bf]{caption}
\usepackage{cleveref}

\DeclareMathOperator*{\argmin}{arg\,min}

\newcommand{\R}{\mathbbm{R}}
\newcommand{\E}{\mathbbm{E}}

\newcommand{\cov}{\mathsf{cov}}

\newcommand{\diag}{\mathsf{diag}}


\newcommand{\dv}[3][]{\frac{\mathrm{d}^{#1}#2}{\mathrm{d}#3^{#1}}}


\newcommand{\bm}{\bs{m}}

\newcommand{\bp}{\bs{p}}

\newcommand{\bt}{\bs{t}}
\newcommand{\bu}{\bs{u}} 
\newcommand{\bv}{\bs{v}}
\newcommand{\bw}{\bs{w}} 
\newcommand{\bx}{\bs{x}}  
\newcommand{\by}{\bs{y}} 
\newcommand{\bz}{\bs{z}}  

\newcommand{\bA}{\bs{A}}
\newcommand{\bB}{\bs{B}}
\newcommand{\bC}{\bs{C}}
\newcommand{\bD}{\bs{D}}

\newcommand{\bG}{\bs{G}}
\newcommand{\bH}{\bs{H}}
\newcommand{\bI}{\bs{I}} 

\newcommand{\bK}{\bs{K}}
\newcommand{\bL}{\bs{L}}
\newcommand{\bM}{\bs{M}}
\newcommand{\bN}{\bs{N}} 
 
\newcommand{\bP}{\bs{P}} 
\newcommand{\bQ}{\bs{Q}} 
\newcommand{\bR}{\bs{R}} 
\newcommand{\bS}{\bs{S}} 
\newcommand{\bT}{\bs{T}}
\newcommand{\bU}{\bs{U}} 
\newcommand{\bV}{\bs{V}}
\newcommand{\bW}{\bs{W}}

\newcommand{\bZ}{\bs{Z}}  
\newcommand{\beps}{\bs{\epsilon}}
\newcommand{\bXi}{\bs{\Xi}}
\newcommand{\bxi}{\bs{\xi}}
\newcommand{\bGamma}{\bs{\Gamma}}

\newcommand{\bSigma}{\bs{\Sigma}}
\newcommand{\bDelta}{\bs{\Delta}}

\newcommand{\cN}{\mathcal{N}}

\newcommand{\nin}{d_{\rm in}}  
\newcommand{\nout}{d_{\text{out}}} 
\newcommand{\nmeas}{n}
\newcommand{\nobs}{d_{\rm obs}} 
\newcommand{\ns}{d}           
\newcommand{\bxin}{\bx_0}
\newcommand{\tstart}{t_{\mbox{\tiny{start}}}}
\newcommand{\tend}{t_{\mbox{\tiny{end}}}}

\newcommand{\Qm}{\bQ_{\bm}}

\newcommand{\Qy}{\bQ_{\by}}


\newcommand{\Pm}{\mathbb{P}} 
\newcommand{\Mset}{M}
\newcommand{\bmu}{\bs{\mu}}
\newcommand{\Cov}{\Sigma}
\DeclareMathOperator*{\argmax}{argmax}

\usepackage{array,multirow,booktabs,bbm} 

\newcommand{\bs}[1]{\boldsymbol{#1}}

\newcommand{\rev}[1]{{\leavevmode\color{blue}{#1}}}                     

\begin{document}

\title{Model Reduction of Linear Dynamical Systems\\
via Balancing for Bayesian Inference\thanks{This material is based upon work supported by the National Science Foundation under Grant No. DMS-1439786 and by the Simons Foundation Grant  No. 50736 while the authors were in residence at the Institute for Computational and Experimental Research in Mathematics in Providence, RI, during the ``Model and dimension reduction in uncertain and dynamic systems" program. EQ was supported in part by the Fannie and John Hertz Foundation. JMT was partially supported by EPSRC grant EP/S027785/1. AN was partially supported by NSF DMS-1848508. CB and SG were partially supported  by NSF DMS-1819110.}}

\author{Elizabeth Qian \and
        Jemima M. Tabeart \and 
        Christopher~Beattie \and
        Serkan Gugercin \and
        Jiahua Jiang \and
        Peter R. Kramer \and
        Akil Narayan  \and 
}

\authorrunning{Qian, Tabeart, Beattie, Gugercin, Jiang, Kramer, Narayan}

\institute{E. Qian \at
              California Institute of Technology \and
           J.M. Tabeart \at
              The University of Edinburgh \and
           C. Beattie \at
              Virginia Tech \and
            S. Gugercin \at
              Virginia Tech \and
           J. Jiang \at
              ShanghaiTech University \and
           P.R. Kramer \at
              Rensselaer Polytechnic Institute \and
           A. Narayan \at
              University of Utah \and
}

\date{}

\maketitle

\begin{abstract}
    We consider the Bayesian approach to the linear Gaussian inference problem of inferring the initial condition of a linear dynamical system from noisy output measurements taken after the initial time. In practical applications, the large dimension of the dynamical system state poses a computational obstacle to computing the exact posterior distribution. \emph{Model reduction} offers a variety of computational tools that seek to reduce this computational burden. In particular, \emph{balanced truncation} is a system-theoretic approach to model reduction which obtains an efficient reduced-dimension dynamical system by projecting the system operators onto state directions which trade off the reachability and observability of state directions as expressed through the associated Gramians.  We introduce Gramian definitions relevant to the inference setting and propose a balanced truncation approach based on these inference Gramians that yield a reduced dynamical system that can be used to cheaply approximate the posterior mean and covariance. Our definitions exploit natural connections between (i) the reachability Gramian and the prior covariance and (ii) the observability Gramian and the Fisher information. The resulting reduced model then inherits stability properties and error bounds from system theoretic considerations, and in some settings yields an optimal posterior covariance approximation. Numerical demonstrations on two benchmark problems in model reduction show that our method can yield near-optimal posterior covariance approximations with order-of-magnitude state dimension reduction.
\keywords{Bayesian inference \and Balanced truncation \and Model reduction}
\end{abstract}

\section{Introduction}
The task of updating computational models to take into account empirical observations --- 
a task that plays a central role in \emph{data assimilation} --- occurs
in a diverse range of disciplines, including geophysics, oceanography, and atmospheric sciences \cite{kalnay_atmospheric_2002,bennett_inverse_2005,palacios2006non,evensen_data_2009}. Specific applications include sea ice modeling \cite{rounce_quantifying_2020,yang2020tipping}, geothermal reservoir model calibration \cite{cui_bayesian_2011,cui2019posteriori}, and nitrogen concentration of water bodies \cite{de2002bayesian}.
In the Bayesian inferential approach to data assimilation, model parameters $\bp\in\R^{\ns}$ are related to observed data $\bm\in\R^{\nobs}$ through a measurement model that combines a \emph{forward map} $\bG:\mathbb{R}^{\ns}\rightarrow \mathbb{R}^{\nobs}$ with additive stochastic noise $\beps\in\R^{\nobs}$:
\begin{align}\label{eq:linear-model}
  \bm = \bG(\bp)  + \beps.
\end{align}
Before observations are obtained, uncertainty about the model parameters is encoded in a \emph{prior} probability distribution for $\bp$. The probability distribution of the measurements conditioned on the parameter, $\bm|\bp$, is called the \emph{likelihood}. After observations are obtained, the Bayesian approach computes an updated \emph{posterior} distribution for $\bp|\bm$ which reflects the uncertainty in the parameters conditioned on the measured data and is proportional to the product of the prior and the likelihood of the data.
Once the posterior distribution has been characterized, subsequent steps in application may require computing and utilizing posterior statistics.
 In the linear Gaussian case, this requires computing only a posterior mean and covariance, but for general non-Gaussian posteriors, these statistics are approximated using sampling procedures \cite{soize2020sampling} (e.g., rejection sampling \cite{vonNeumann1951,deligiannidis2020ensemble,devroye1986general} or Markov chain Monte Carlo methods \cite{miroshnikov2015parallel,apte2007sampling,hensman2015mcmc}).
Modern overviews of inference techniques and algorithms can be found in excellent survey articles \cite{stuart_inverse_2010,dashti_bayesian_2017} and textbooks \cite{tarantola_inverse_2005,calvetti_introduction_2007,law_data_2015}. 

In practical settings, several factors may pose computational challenges to the use of Bayesian inference. First, the parameter dimension $\ns$ is often large; e.g., when $\bp$ represents the high-dimensional spatial discretization of a continuous field. Second, it is common to be interested in parameter values, $\bp$, at an initial time, with observations, $\bm$, taken after the initial time. In these settings, the forward model may involve simulation of a high-dimensional dynamical system, making the characterization of the posterior distribution computationally expensive.

To reduce this computational burden, one class of methods exploits the fact that in practice the data are only informative in a low-dimensional subspace of $\mathbb{R}^{\ns}$. 
We highlight the approach of~\cite{spantini_optimal_2015} for the linear Gaussian setting, which defines the posterior covariance as a negative semi-definite update to the prior covariance. The work~\cite{spantini_optimal_2015} shows that the optimal low-rank approximation to this update lies in a subspace spanned by directions identified from dominant eigenvalue-eigenvector pairs of a matrix pencil that encodes a trade-off between the prior uncertainty and the Fisher information of the measurement model. This low-rank approximation performs well when $\nobs\ll\ns$, as is common in data assimilation problems for weather and climate \cite{kalnay_atmospheric_2002,bennett_inverse_2005}. More broadly, this approach is beneficial in cases where the eigenvalues of the matrix pencil exhibit rapid decay, which can occur even if $\nobs>\ns$: some examples are discussed in the introduction of~\cite{spantini_optimal_2015}.  Extensions of the approach in~\cite{spantini_optimal_2015} have been used to develop low rank approximation strategies for inverse problems with nonlinear forward models~\cite{cui2014likelihood}, goal-oriented notions of optimality~\cite{spantini2017goal}, and low-rank tensor-based approaches to reducing computational complexity in time as well as state~\cite{benner2018compression}.

A complementary class of methods employs \emph{model reduction}, which seeks to reduce the expense of evaluating the forward model $\bG$. In projection-based model reduction, a high-dimensional linear dynamical system is approximated by projecting the system operators onto a low-dimensional subspace, yielding an inexpensive low-dimensional dynamical system~\cite{gugercin2004survey,antoulas2005approximation,benner_survey_2015,antoulas_interpolatory_2020}.  One popular strategy known as \emph{balanced truncation}~\cite{moore1981principal,mullis1976synthesis} defines the projection subspace as the span of the dominant eigenvectors of a matrix pencil that encodes a trade-off between states that the system can easily reach and states that contribute more to the system output.  States are retained if they are both highly observable and easily reachable; otherwise, they are discarded.  The resulting reduced-order dynamics evolve in a way that approximate the input-output map of the original large-order dynamical system with high fidelity, frequently with far smaller system order \cite{gugercin2004survey,lieberman2013goal,benner2017balanced,petreczky2012theoretical}.

\subsection{Contributions of this article}
This work considers the Bayesian inference setting where the parameter $\bp$ to be inferred is the initial state (at time $t = 0$) of a linear dynamical system, which is presumed to have a Gaussian prior distribution. A linear observer views the evolving state at selected times $t>0$ in the presence of  Gaussian measurement noise. The primary goal
of this work is then to propose a new method of model reduction for linear dynamical systems in this setting that exploits natural connections between (i) the optimal posterior approximation strategy of~\cite{spantini_optimal_2015} and (ii) the system-theoretic model reduction method of balanced truncation. Our contributions are:

\begin{enumerate}
    \item We define a notion of prior \emph{compatibility} with system dynamics that allows the prior covariance matrix to be interpreted as a reachability Gramian in the inference context. 
    \item We propose two strategies for choosing a compatible prior: (i) by subjecting the system to a stochastic driving force at negative times to ``spin up'' the system state to a stationary distribution, and (ii) by modifying an incompatible prior to make it compatible. The spin-up strategy is related to ideas from the work~\cite{redmann_balanced_2018} which considers stochastic dynamical systems.    
    \item We define a \emph{noisy observability Gramian} that modifies the usual system-theoretic observability Gramian definition to account for additive noise in measurements. 
    \item We argue that the noisy observability Gramian is a natural continuous-time analogue of the Fisher information matrix, and we introduce two observation processes under which the Fisher information matrix recovers (in expectation\slash in the continuum limit)  the noisy observability Gramian up to a scaling constant.
    \item We propose a model reduction approach for our inference setting that applies the method of balanced truncation using a compatible prior covariance and the noisy observability Gramian, which inherits system-theoretic stability and error guarantees.
    \item We show that the proposed approach can recover the optimal posterior covariance approximation of~\cite{spantini_optimal_2015} in certain cases.
    \item We demonstrate numerically that the proposed approach yields near-optimal approximations when the noisy observability Gramian and Fisher information matrix define similar approximation subspaces, which occurs in our numerical experiments when measurements are closely spaced over a long observation interval. Even when measurements are limited and the approximation subspaces are not particularly close, the proposed approach yields approximations that achieve order-of-magnitude system state dimension reduction with low, albeit sub-optimal, errors. 
\end{enumerate} 

We note that~\cite{spantini_optimal_2015} shows that the optimal low-rank posterior update coincides with an oblique projection of the high-dimensional forward map $\bG$. Our contributions extend this result in two ways: first, we show that the low-rank update is in fact immediately associated with a \emph{low-dimensional} reduced model for the linear dynamical system. This enables efficient evolution of the reduced dynamical system for low rank inference in settings where the map $\bG$ is only available implicitly through a high-dimensional linear dynamical systems model. Second, we show that with certain observation models, this reduced model has bounded error and inherits stability.
We also note an earlier work~\cite{lieberman2013goal} developed a goal-oriented low-rank approach to classical and statistical inverse problems that was also inspired by balanced truncation but departs from the dynamical system setting: the parameter is inferred in a low-dimensional subspace defined by an `experiment observability Gramian' and a goal-oriented `prediction observability Gramian'. 
In contrast, we show that balanced truncation for dynamical systems can be used to develop reduced dynamical systems with desirable properties. 

\Cref{sec: Setting} introduces our Bayesian inference problem for the initial condition of a linear dynamical system, summarizes the balanced truncation approach to model reduction, and presents the low rank inference procedure via optimal posterior updates from \cite{spantini_optimal_2015}. \Cref{sec:low-rank-models} introduces our Gramian definitions for the inference setting, proposes the balanced truncation posterior approximation approach based on these Gramians, and proves results regarding the error, stability, and optimal posterior covariance approximation of the resulting model. These theoretical results are based on specific relationships between system theoretic Gramians and matrices appearing in inference that are discussed in detail in \Cref{sec: discussion}.  \Cref{sec: num exp} presents and discusses numerical demonstrations of the proposed method on two benchmark problems. \Cref{sec: conclusions} summarizes our discussion and considers directions for further investigation.

\section{Setting and background}\label{sec: Setting}
In this section, we introduce our notation (\Cref{ssec: notation}) and formulate the Bayesian inverse problem we consider for a high-dimensional linear dynamical system (\Cref{ssec: formulation}). We summarize two approaches to dimension reduction relevant in our setting: the system-theoretic approach to model reduction of linear time-invariant systems (\Cref{ssec: BT bg}) and the low rank inference approach of~\cite{spantini_optimal_2015} (\Cref{ssec: spantini}).

\subsection{Notation}\label{ssec: notation}
Vectors will be denoted by boldface lowercase letters, e.g., $\bv$, and matrices by boldface uppercase letters, e.g., $\bV$.  We will use $\|\bv\|$ to denote the Euclidean ($\ell^2$) norm, and $\|\bv\|^{2}_{\bM}=\bv^\top\bM\bv$ for symmetric positive (semi-)definite $\bM$ to denote the squared weighted $\bM$-(semi)norm. For square matrices $\bA$ and $\bB$ of the same size, $\bA \succ \bB$ and $\bA \succeq \bB$ mean that $\bA - \bB$ is positive definite and positive semidefinite, respectively, with analogous meanings for $\prec$ and $\preceq$.
The positive integers $\ns$, $\nin$, and $\nout$ 
denote the dimensions of state, input, and output vectors of our dynamical system. 
Random vectors (i.e., vectors of random variables) will not be denoted differently from (deterministic) vectors, but their properties will be noted when they are introduced. 
The expectation of a random vector $\bx$ is denoted as $\E[\bx]$ and the covariance of two random vectors $\bx,\by$ is denoted $\cov[\bx,\by]$.

\subsection{Problem formulation}\label{ssec: formulation}
We consider the following linear dynamical system:
\begin{align}
  \frac{\difd\bx}{\difd t} = \bA\bx,
  \label{eq: dxdt is Ax}
\end{align}
where $\bx(t)\in\R^d$ and $\bA\in\R^{d\times d}$. The initial state $\bx(0) = \bp$ is unknown. Noisy measurements $\bm_i$ of the system output are taken at times $0<t_1<t_2<\ldots<t_n$ according to the following measurement model:
\begin{align}
  \bm_i \equiv \bC \bx(t_i) + \beps_i = \bC e^{\bA t_i}\bp + \beps_i,
  \label{eq: measurements}
\end{align}
where $\bC\in\R^{\nout\times\ns}$ is the state-to-output operator and $\beps_i\sim\cN(\bzero,\gmeas)$ represents independently and identically distributed additive Gaussian noise, with positive definite covariance $\gmeas\in\R^{\nout \times\nout}$. 
In our analysis, we will consider both finite $n$ and $n = \infty$ measurements, and we will allow the observations times $\bt = (t_1, \ldots, t_n)^\top$ to be either deterministic or randomly distributed.  Random observation times are to be interpreted in the sense that they are uncertain at the beginning of the observation process, but become known as measurements are taken.  The random observation model is introduced principally to facilitate discussion of the linkage between observability Gramians and Fisher information in \Cref{sec:randgram}.

Our task is to infer the unknown initial state $\bx(0) = \bp$ from the noisy measurements in \cref{eq: measurements}. We use a Bayesian statistical approach to do so; i.e., we take as given a Gaussian prior:
\begin{align}
  \bp \sim \cN(\mathbf{0},\gprior)
    \label{eq: prior}
\end{align}
with prior covariance $ \gprior \in \R^{\ns \times \ns}$. The Gaussian prior, combined with our linear Gaussian measurement model (\cref{eq: measurements}), yields the following Gaussian likelihood conditioned on the measurement times:
\begin{align}
  \bm \mid (\bp,\bt) \sim \cN(\bG\bp, \gobs),
\end{align}
where $\bm\in\R^{n\nout}$, $\bG\in\R^{n\nout\times \ns}$, and $\gobs\in\R^{n\nout \times n\nout}$ are given as follows:
\begin{align}\label{eq:mdyn-distribution}
  \mbox{$
  \bm \equiv \begin{bmatrix} \bm_1 \\ \vdots \\ \bm_n \end{bmatrix},
  $ }
  \qquad
  \bG \equiv \begin{bmatrix}
    \bC \expe^{\bA t_1} \\ 
    \vdots \\
    \bC \expe^{\bA t_n}
  \end{bmatrix}, 
  \qquad
  \gobs \equiv \begin{bmatrix}
        \gmeas & \\
        &  \gmeas & \\
        & & \ddots & \\
        & & &  \gmeas
    \end{bmatrix}.
\end{align}

The Bayesian posterior distribution (conditioned on the measurement times) for such a prior and likelihood is again Gaussian:
\begin{align}\label{eq:posterior}
  \bp \mid (\bm, \bt) \sim \cN(\mpos,\gpos),
\end{align}
where
\begin{align}
    \mpos = \gpos \bG^\top \gobs^{-1}\bm, \qquad \gpos = \left( \bH + \gprior^{-1}\right)^{-1},
    \label{eq: posterior mean and cov}
\end{align}
where $\bH\in\R^{\ns\times \ns}$ denotes the Fisher information matrix of $\bs{m}$ with respect to the parameter $\bx_0$:
\begin{align}
    \bH \equiv \bG^\top\gobs^{-1}\bG = \sum_{i=1}^{\nmeas} \bG_i^\top \gmeas^{-1} \bG_i = \sum_{i=1}^{\nmeas} \expe^{\bA^\top t_i} \bC^\top \gmeas^{-1} \bC \expe^{\bA t_i}.
    \label{eq: fisher info sum}
\end{align}
In practical settings, the high state dimension $\ns$ can pose challenges to the computation of the conditional posterior statistics \eqref{eq: posterior mean and cov}, especially when $\bG$ is available only implicitly through a computationally expensive model that evolves the high-dimensional dynamical system~\eqref{eq: dxdt is Ax}. Our procedure addresses this challenge by providing a computationally efficient strategy for emulating the dynamical system through model reduction.

\subsection{System-theoretic model reduction via balanced truncation}\label{ssec: BT bg}
Balanced truncation~\cite{moore1981principal,mullis1976synthesis} is a system-theoretic method for projection-based model reduction that yields a cheaply evaluated
low-dimensional reduced model whose input-output map approximates that of the original high-dimensional system. \Cref{ssec: sys theory background} describes the system theory setting and \Cref{sssec: BT background} presents the balanced truncation method. 

\subsubsection{System theory background}\label{ssec: sys theory background}
The usual setting in which balanced truncation is considered involves a linear time-invariant (LTI) system given by
\begin{subequations}\label{eq: LTI dyn sys}
\begin{align}
  \frac{\difd\bx}{\difd t} &= \bA\bx + \bB \bu(t)\\
  \by &= \bC \bx
\end{align}
\end{subequations}
where $\bu(t)$ is a $\nin$-dimensional input which drives the system filtered through the input port, $\bB\in\R^{\ns \times \nin}$, and $\bC\in\R^{\nout\times \ns}$ describes the $\nout$-dimensional observer output, $\by(t)$. This system-theoretic setting differs from our inference setting in \cref{eq: dxdt is Ax} through the presence of an input $\bu(t)$ and the lack of noise in the output $\by(t)$ (compared to the measurements $\bm_i$ defined in \cref{eq: measurements}).

We assume that the system \eqref{eq: LTI dyn sys} is  stable, i.e., the eigenvalues of $\bA$ lie in the open left half-plane.
The stable system~\eqref{eq: LTI dyn sys} has infinite \emph{reachability} Gramian $\bPinf\in\R^{\ns\times \ns}$ and \emph{observability} Gramian $\bQ_{\by}\in\R^{\ns\times\ns}$, given by
\begin{align}
\bPinf &\equiv \int_0^\infty e^{\bA t} \bB \bB^\top e^{\bA^\top t} \,\difd t, &
\Qy &\equiv \int_0^\infty e^{\bA^\top t} \bC^\top \bC e^{\bA t} \,\difd t.
\label{eq: standard infinite gramians}
\end{align}
We use subscripts $\cdot_{\infty}$ and $\cdot_{\by}$ to denote the specific Gramians defined in \eqref{eq: standard infinite gramians}, contrasting with generic reachability and observability Gramians, $\bP$ and $\bQ$, which could be defined either as above or in other ways. Note in particular that $\Qy$ denotes an observability Gramian tied to the LTI output $\by$~\eqref{eq: LTI dyn sys}, whereas we will later introduce a Gramian $\Qm$ for noisy measurements $\bm$~\eqref{eq: measurements}.
The Gramians $\bPinf,\Qy$ are unique solutions to the dual Lyapunov equations
\begin{align} \label{LyapEqns}
    \bA\bPinf + \bPinf \bA^\top  = -\bB\bB^\top, &&
    \bA^\top\Qy + \Qy\bA = -\bC^\top \bC.
\end{align}
These infinite-horizon Gramians for the dynamical system \eqref{eq: LTI dyn sys} are associated with a reachability energy 
and an observability energy, 
defined as quadratic forms with respect to a state $\hat{\bx}\in\R^{d}$, respectively, as
\begin{align}\label{eq:system-energies}
    \|\hat{\bx}\|_{\bPinf^{-1}}^2 \equiv \hat{\bx}^\top\bPinf^{-1}\hat{\bx}, &&
    \|\hat{\bx}\|_{\Qy}^2 \equiv \hat{\bx}^\top \Qy\hat{\bx}.
\end{align}
The reachability energy $\|\hat{\bx}\|^2_{\bPinf^{-1}}$ is the minimum $\mathrm{L}^2$-norm $\int_{0}^\infty \|\bu(t)\|^2 \ \difd t$ of a control input, $\bu(t)$, required to steer the system from the origin ($\bx(0)=\mathbf{0}$) to $\hat{\bx}$, potentially over an infinite time horizon. Similarly, observability energy $\|\hat{\bx}\|_{\Qy}^2$ is the $\mathrm{L}^2$-norm  $\int_{0}^\infty \|\by(t)\|^2 \ \difd t$ of the (noise-free) output signal over an infinite time horizon generated by the free evolution of the system initialized at $\bx(0)=\hat{\bx}$ with null input, $\bu(t)=0$.  While the Gramian definitions in~\cref{eq: standard infinite gramians} are perhaps the most commonly used, finite-horizon and {discretized/approximate Gramian definitions} are also sometimes employed, the latter especially for large-scale problems where solving \cref{LyapEqns} is computationally intractable. For details, see, e.g., {\cite{antoulas2005approximation,BennerB2017,benner2013numerical,gugercin2004survey}}.

A system is \emph{reachable} if $\bPinf$ is positive definite (denoted as $\bPinf\succ 0$) and a system is \emph{observable} if $\Qy\succ0$. The system is \emph{minimal} if it is both reachable and observable. In all that follows, we assume that the system is minimal.

\subsubsection{Balanced truncation}\label{sssec: BT background}
In projection-based model reduction, a reduced model is obtained by projecting the system operators onto low-dimensional subspaces. In balanced truncation, these subspaces are determined by the dominant eigenvectors of the matrix pencil $(\bQ,\bP^{-1})$, i.e., the $\bv_i$ associated with the dominant eigenvalues, $\delta_i^2$, satisfying
\begin{align}
         \bQ\, \bv_i=\delta_i^2 \ \bP^{-1}\, \bv_i,
         \label{eq: BT GEV problem}
\end{align}
where $\bP$ is a reachability Gramian and $\bQ$ is an observability Gramian: the standard balanced truncation approach takes $\bP=\bPinf$ and $\bQ=\Qy$, but we will in \Cref{sec:low-rank-models} introduce approaches that use different Gramians. The scalars $\left\{\delta_i\right\}_{i=1}^d$ with $\delta_1 > \delta_2> \cdots >\delta_\ns$ are called the \textit{Hankel singular values} and are the nonzero singular values of the underlying Hankel operator. We assume that the Hankel singular values are distinct to simplify the presentation. 
The dominant eigenvectors, $\bv_i$, maximize the following Rayleigh quotient:
\begin{align}
    \frac{\bv^\top \bQ \bv}{\bv^\top \bP^{-1} \bv} = \left(\frac{\|\bv\|_{\bQ}}{\|\bv\|_{\bP^{-1}}}\right)^2.
    \label{eq: rayleigh quotiet}
\end{align}
Thus, directions retained by balanced truncation will be simultaneously easy to reach (low reachability energy) and easy to observe (high observability energy), or will realize some trade-off between these two desirable properties. 

The balanced truncation procedure has the following explicit algebraic formulation. Suppose $\bP = \bR\bR^\top$ and $\bQ = \bL\bL^\top$ (e.g., through a Cholesky factorization), and let $\bU\bDelta\bZ^\top= \bL^\top\bR$ be the singular value decomposition of $\bL^\top\bR$. Then $\bDelta$ contains the Hankel singular values,
$  \bDelta = \diag\left(\delta_1, \ldots, \delta_d\right). $
Define $\bT = \bR\bZ\bDelta^{-\frac12} \in \R^{\ns\times\ns}$ and note that $\bT$ is \emph{invertible}
with  $\bT^{-1} = \bDelta^{-\frac12}\bU^\top\bL^\top$. 
The ordered generalized eigenvectors, $\bv_i$, of $(\bQ, \bP^{-1})$ are the columns of $\bV=\bR\bZ$, and through a renormalization by $\bDelta^{-\frac12}$ become the columns of $\bT$.  

 $\bT$ is a \textit{balancing} transformation, meaning that in the transformed coordinates $\widetilde\bx=\bT^{-1}\bx$ the reachability and observability Gramians are equal diagonal matrices.  The transformed LTI system is given by,
\begin{subequations}\label{eq: LTI balanced}
\begin{align}
  \frac{\difd\widetilde{\bx}}{\difd t} &= \widetilde{\bA}\widetilde{\bx} + \widetilde{\bB} \bu(t)\\
  \widetilde{\by} &= \widetilde{\bC} \widetilde{\bx},
\end{align}
\end{subequations}
with
\begin{align*}
  \bT \widetilde{\bx} &= \bx, & \widetilde{\bA} &= \bT^{-1} \bA \bT, & \widetilde{\bB} &= \bT^{-1} \bB, & \widetilde{\bC} &= \bC \bT,
\end{align*}
and the system \eqref{eq: LTI balanced} is called (principal-axis) \emph{balanced} because its reachability and observability Gramians $\widetilde{\bP}$ and $\widetilde{\bQ}$ are equal and diagonal, satisfying
\begin{align}\label{eq:balanced-gramians}
  \bT^{-1} \bP \bT^{-\top} = \widetilde{\bP} = \bSigma = \widetilde{\bQ} = \bT^\top \bQ \bT.
\end{align}
Order-$r$ balanced truncation obtains a reduced dynamical system by transforming to the balancing coordinates $\widetilde\bx$ and truncating to the leading $r$ components of this balanced state. 
Denote by $\bU_r,\bZ_r\in\R^{d\times r}$ the leading $r$ columns of $\bU,\bZ$,  respectively, and let $\bDelta_r$ denote the upper-left $r\times r$ block of $\bDelta$. Then, let $\bT_r$ denote the leading $r$ columns of $\bT$ and let $\bS_r^\top$ denote the leading $r$ rows of $\bT^{-1}$, i.e.,
\begin{align}
    \bT_r = \bR\bZ_r\bDelta_r^{-\frac12} \in \R^{d\times r}, \qquad \bS_r^\top  =\bDelta_r^{-\frac12}\bU_r^\top\bL^\top\in\R^{r\times d}.
    \label{eq: balancing bases}
\end{align}
{Note that $\bS_r^\top \bT_r = \bI_r$.}
Then, the order-$r$ balanced truncation approximation of~\eqref{eq: LTI dyn sys} is a Petrov-Galerkin approximation given by restricting the state to lie in $\mathsf{range}({\bT_r})$, and by enforcing~\eqref{eq: LTI dyn sys} on $\mathsf{range}({\bS_r})^{\perp}$. This results in the reduced-order approximation of \cref{eq: LTI dyn sys},
\begin{subequations}\label{eq: bal red dynamics}
  \begin{align}
    \frac{\difd \bx_r}{\difd t} &= \bA_r\bx_r + \bB_r \bu(t), \\
    \by_r &= \bC_r \bx,
  \end{align}
  \label{eq: BT reduced dyn sys}
  \end{subequations}
where $\bA_r = \bS_r^\top\bA\bT_r\in\R^{r\times r}$, $\bB_r = \bS_r^\top\bB\in\R^{r\times\nin}$, $\bC_r = \bC\bT_r\in\R^{\nout\times r}$, and $\bT_r\bx_r\approx\bx$. 

Balanced truncation provides guaranteed stability and \textit{a priori} $L_2$-error bounds on the approximation error expressed
in terms of the Hankel singular values in \eqref{eq: BT GEV problem}:
\begin{lemma}[Thm. 7.9 of \cite{antoulas2005approximation}]\label{lemma: continuous time BT}
  Consider the system \eqref{eq: LTI dyn sys} with output $\by$ and reachability and observability Gramians $\bPinf$ and $\Qy$ defined in~\eqref{eq: standard infinite gramians}. If the original system~\eqref{eq: LTI dyn sys} is {stable}, then the reduced model~\eqref{eq: BT reduced dyn sys} is both stable and balanced: the infinite-horizon reachability and observability Gramians of the reduced state, $\bx_r$, are diagonal and equal, and given by
  \begin{align} \label{eq:redgrams}
     {\bP}_r \equiv  \int_0^\infty  e^{\bA_rt} \bB_r \bB_r^\top e^{\bA_r^\top t}  dt =   \int_0^\infty  e^{\bA_r^\top t} \bC_r^\top \bC_r e^{\bA_r t}  dt \equiv {\bQ}_{\by_r} = 
     \diag\left(\delta_1, \ldots, \delta_r \right)  \in\R^{r\times r}.
  \end{align}
  Furthermore, if $\bx(0)=0$, then the output $\by_r$ of \cref{eq: BT reduced dyn sys} commits an error bounded by twice the sum of the truncated Hankel singular values:
  \begin{align}\label{eq:hsv-bt-bound}
    \left\| \by - \by_r \right\|_{L^2(\R)} \leq \left( 2  \sum_{j=r+1}^d \delta_j \right) \left\| \bu \right\|_{L^2(\R)}. 
  \end{align}
\end{lemma}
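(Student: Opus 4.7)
The plan is to prove the three claims in sequence --- balancing and identification of the reduced Gramians, stability of $\bA_r$, and the $L^2$-error bound --- working throughout in the balanced coordinates so that the symmetry $\widetilde{\bP} = \widetilde{\bQ} = \bDelta$ can be exploited. First I would verify that $\bT$ is a balancing transformation by direct substitution: using $\bT = \bR\bZ\bDelta^{-1/2}$, $\bT^{-1} = \bDelta^{-1/2}\bU^\top\bL^\top$, the factorizations $\bP = \bR\bR^\top$ and $\bQ = \bL\bL^\top$, and the SVD $\bL^\top\bR = \bU\bDelta\bZ^\top$, one obtains $\bT^{-1}\bP\bT^{-\top} = \bT^\top\bQ\bT = \bDelta$, confirming~\eqref{eq:balanced-gramians}. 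Under the coordinate change $\widetilde{\bx} = \bT^{-1}\bx$, the balanced operators $\widetilde{\bA} = \bT^{-1}\bA\bT$, $\widetilde{\bB} = \bT^{-1}\bB$, $\widetilde{\bC} = \bC\bT$ then satisfy the Lyapunov equations $\widetilde{\bA}\bDelta + \bDelta\widetilde{\bA}^\top = -\widetilde{\bB}\widetilde{\bB}^\top$ and $\widetilde{\bA}^\top\bDelta + \bDelta\widetilde{\bA} = -\widetilde{\bC}^\top\widetilde{\bC}$.

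Next I partition the balanced operators conformably with $\bDelta = \diag(\bDelta_r, \bDelta_{d-r})$, writing $\widetilde{\bA}, \widetilde{\bB}, \widetilde{\bC}$ in $2\times 2$ block form. By construction the reduced operators are the leading blocks: $\widetilde{\bA}_{11} = \bS_r^\top \bA \bT_r = \bA_r$, $\widetilde{\bB}_1 = \bB_r$, $\widetilde{\bC}_1 = \bC_r$. Reading off the $(1,1)$-blocks of the two balanced Lyapunov equations gives $\bA_r\bDelta_r + \bDelta_r\bA_r^\top = -\bB_r\bB_r^\top$ and $\bA_r^\top\bDelta_r + \bDelta_r\bA_r = -\bC_r^\top\bC_r$, so $\bDelta_r$ simultaneously solves both reduced Lyapunov equations. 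To conclude that $\bDelta_r$ coincides with the unique reduced Gramians in~\eqref{eq:redgrams} I need stability of $\bA_r$: since $\bDelta_r \succ 0$, the standard eigenvalue argument applied to the first reduced equation yields $\mathrm{Re}(\lambda) \leq 0$ for every eigenvalue $\lambda$ of $\bA_r$, and the assumption that the Hankel singular values are distinct forces the spectra of $\bDelta_r$ and $\bDelta_{d-r}$ to be disjoint; a Sylvester-type analysis of the off-diagonal blocks of the balanced Lyapunov equations then rules out purely imaginary eigenvalues of $\bA_r$, giving stability and hence uniqueness.

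The main obstacle is the $L^2$-error bound~\eqref{eq:hsv-bt-bound}. I would follow the standard one-at-a-time truncation strategy: for $k = d, d{-}1, \ldots, r{+}1$, construct intermediate reduced models obtained by truncating the smallest remaining Hankel singular value, and show that each such single-mode truncation incurs $H^\infty$ error at most $2\delta_k$ by exhibiting the corresponding error transfer function as a scalar multiple of an all-pass function (the distinctness hypothesis makes each step a scalar mode truncation, which considerably simplifies the bookkeeping). Chaining the $d - r$ single-step bounds via the triangle inequality in $H^\infty$ then yields a total $H^\infty$ error of at most $2\sum_{j=r+1}^d \delta_j$. Finally, since $\bx(0) = 0$ and both the full and reduced systems are stable, Parseval's identity applied to the Fourier/Laplace representation of the input--output map shows that the induced operator norm $L^2(\R) \to L^2(\R)$ of the error system coincides with the $H^\infty$ norm of its transfer function, converting the $H^\infty$ bound into~\eqref{eq:hsv-bt-bound}. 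The all-pass dilation underlying the single-step bound is the most delicate piece of the argument, and is where essentially all of the technical effort in a full write-up would be concentrated.
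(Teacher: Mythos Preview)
The paper does not supply its own proof of this lemma: it is stated as a citation of Thm.~7.9 in \cite{antoulas2005approximation} and used as a black box. Your outline is precisely the standard argument found in that reference (and originally due to Enns and Glover): balance, partition, read off the $(1,1)$ Lyapunov blocks, rule out imaginary-axis eigenvalues using disjointness of $\sigma(\bDelta_r)$ and $\sigma(\bDelta_{d-r})$, then obtain the $H^\infty$ bound by truncating one Hankel singular value at a time and exhibiting each single-step error as $2\delta_k$ times an all-pass function. So there is nothing to compare against here; your sketch is correct and aligns with the cited source.
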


\subsection{Low rank inference via optimal posterior covariance approximation}\label{ssec: spantini}
We now summarize the main results of~\cite{spantini_optimal_2015} for dimension reduction of the linear Gaussian Bayesian inverse problem by optimal approximation of the posterior covariance and mean. This approach is applicable even when $\bG$ does not entail the evolution of a dynamical system, and thus complements the model reduction approach of \Cref{ssec: BT bg}.

Let $(\tau_i^2,\bw_i)$ denote the ordered generalized eigenpairs of the pencil $(\bH,\gprior^{-1})$, i.e.,
\begin{align}
    \bH \bw_i = \tau_i^2\gprior^{-1}\bw_i,
\end{align}
with $\bw_i^\top \gprior^{-1}\bw_j=\delta_{i,j}$, where $\delta_{i,j}$ denotes the Kronecker delta, and $\tau_1 \geq \tau_2 \geq \cdots \geq \tau_{\ns}$.
Then, note that the posterior covariance, $\gpos$, satisfies
\begin{align}
    \gpos = (\gprior^{-1} + \bH)^{-1} = \gprior - \sum_{i=1}^d \frac{\tau_i^2}{1+\tau_i^2}\bw_i\bw_i^\top.\label{eq: full posterior eig sum}
\end{align}
When there are only a few non-zero eigenvalues, $\tau_i^2$ (e.g., when $\nobs\ll \ns$), or when $\tau_i^2$ decays rapidly with $i$, the sum in~\eqref{eq: full posterior eig sum} can be well-approximated by a low-rank symmetric positive semidefinite matrix. 
The work in~\cite{spantini_optimal_2015} thus considers the optimal  approximation of $\gpos$ within the following class of rank-$r$ negative semidefinite updates to $\gprior$:
\begin{equation}
\label{approxClass:Mr}
  \mathcal{M}_r \coloneqq \left\{\mathbf{M}\in\mathbb{R}^{\ns\times \ns}\quad \big|\quad \mathbf{M}=\gprior- \bK \bK^\top \succ 0 \;\mbox{ with }\; \mathsf{rank}(\bK) \leq r \right\}.
\end{equation}
The distance between the true posterior covariance and an approximation within $\mathcal{M}_r$ is measured by the F\"orstner metric \cite{forstner_metric_2003} between symmetric positive semidefinite matrices $\bA,\bB\in\R^{d\times d}$:
\begin{align}
  d_{\mathcal{F}}(\bA,\bB) = \sum_{i=1}^d \ln^2(\sigma_i),
\end{align}
where $\sigma_i$ are the eigenvalues of the pencil $(\bA,\bB)$. 
The F\"orstner distance is invariant to similarity transformation, and $d_{\mathcal{F}}(\bA,\bB) = d_{\mathcal{F}}(\bA^{-1},\bB^{-1})$.
The F\"orstner distance is widely used in statistics, being the natural Riemannian metric on the manifold of positive-definite (e.g., full-rank covariance) matrices \cite[Section 6.1]{bhatia_positive_2006} that can be generalized with appropriate modifications to rank-deficient manifolds of semi-definite matrices \cite{bonnabel_riemannian_2010}.

The F\"orstner-optimal approximation $\hgpos$ to $\gpos$ within the class of rank-$r$ updates $\mathcal{M}_r$,   
\begin{align}\label{eq:post-optimization}
  \hgpos=\argmin_{\bM \in \mathcal{M}_r} d_{\mathcal{F}}\left(\gpos, \bM\right) 
\end{align}
 is then given by~\cite{spantini_optimal_2015}:
\begin{align} \label{truncK}
\hgpos = \gprior - \bK_*\bK_*^\top \quad \mbox{where}\quad
    \bK_*\bK_*^\top = \sum_{i=1}^r \frac{\tau_i^2}{1+\tau_i^2}\bw_i\bw_i^\top,
\end{align}
with optimal F\"orstner distance 
\begin{align}\label{eq:forstner-minimizer}
  d_{\mathcal{F}} \left( \gpos, \hgpos \right) = \min_{\bM \in \mathcal{M}_r} d_{\mathcal{F}}\left(\gpos, \bM\right) = \sum_{i=r+1}^d \ln^2\left(\frac{1}{1 + \tau_i^2}\right).
\end{align}

An alternative analysis in~\cite{spantini_optimal_2015} allows one to associate the optimal approximation~\eqref{truncK} with the posterior covariance of a projected forward map.
Note that the pairs $(\frac 1{1+\tau_i^2}, \tilde\bw_i)$ with $\tilde\bw_i = \gprior^{-1}\bw_i$ are the generalized eigenpairs of the pencil $(\gpos,\gprior)$. Let $\boldsymbol{\Pi}_r = \sum_{i=1}^r \tilde\bw_i \bw_i^\top$ denote the oblique spectral projector for $\gprior^{-1} \gpos$ onto $\mathsf{span}\{\tilde\bw_1,\tilde\bw_2,\ldots,\tilde\bw_r\}$. Then, let 
\begin{align}\label{eq:projected-model-and-FI}
  \widehat\bG &=\bG\boldsymbol{\Pi}_r^\top, & 
  \widehat\bH &= \widehat\bG^\top\gobs^{-1}\widehat\bG, & 
\end{align}
denote the projected forward map and projected Fisher information, respectively.
Then, the optimal posterior covariance approximation~\eqref{truncK} can be equivalently expressed as
\begin{align}\label{eq: spantini cov H}
    \hgpos = (\gprior^{-1} + \widehat{\bH})^{-1}.
\end{align}

The work in~\cite{spantini_optimal_2015} further considers optimal approximation to the posterior mean $\mpos$ in two different classes of mean approximations: a class of low-rank (LR) mean approximations given by
\begin{align}\label{eq: low rank mean class}
    \mathcal{V}_r^{\text{LR}} \coloneqq \{ \bv = \bN \bm \mid \bN\in\R^{d\times n\nout}, \mathsf{rank}(\bN) \leq r\}
\end{align}
and a class of mean approximations given by low-rank update (LRU) approximations to the posterior covariance:
\begin{align}\label{eq: low rank update class}
    \mathcal{V}_r^{\text{LRU}} \coloneqq \{ \bv = (\gprior - \bN)\bG^\top\gobs^{-1}\bm \mid \bN\in\R^{d\times d},\, \mathsf{rank}(\bN)\leq r\}.
\end{align}
Within $\mathcal{V}_r^{\rm LR}$, the class of low-rank mean approximations~\eqref{eq: low rank mean class}, the following mean approximation is optimal in the sense that it minimizes 
the $\|\cdot\|_{\gpos^{-1}}$-norm error  averaged over the joint distribution of the initial condition (with marginal given by the prior) and the data  (e.g., Bayes risk):
\begin{align}\label{eq: low rank mean}
    \hmpos^{\text{LR}} = \sum_{i=1}^r\frac{\tau_i}{1+\tau_i^2} \bw_i\tilde\bw_i^\top = \hgpos\widehat{\bG}^\top \gobs^{-1}\bm.
\end{align}
Within $\mathcal{V}_r^{\rm LRU}$, the class of low-rank update mean approximations~\eqref{eq: low rank update class}, the optimal mean approximation in the $\|\cdot\|_{\gpos^{-1}}$-norm Bayes risk is given by~\cite{spantini_optimal_2015}:
\begin{align}\label{eq: low rank update mean}
    \hmpos^{\text{LRU}} = \hgpos\bG^\top \gobs^{-1}\bm.
\end{align}
Note that in a dynamical systems setting, computation of the posterior approximations $\hgpos$, $\hmpos^{\rm LR}$, and $\hmpos^{\rm LRU}$ requires evolving the full high-dimensional dynamics of $\bG$ and then projecting the result to obtain $\widehat\bG$. In the next section, we propose a model reduction approach that yields a low-dimensional dynamical systems model by combining elements from this posterior covariance approximation approach and from balanced truncation.

\section{Balanced truncation for Bayesian inference}\label{sec:low-rank-models}
We now explore the application of the balanced truncation procedure of \Cref{ssec: BT bg} to obtain a reduced model for the linear dynamical system \eqref{eq: dxdt is Ax} in an effort to reduce computational cost in our Bayesian inverse problem.  To do this, we first negotiate two technical hurdles which frustrate a direct connection:  the dynamical system \eqref{eq: dxdt is Ax} lacks an input and thus a notion of reachability, and the standard infinite-horizon observability Gramian $\Qy$ accounts for neither the additive measurement noise nor the specific observation times of the measurement model~\eqref{eq: measurements}.  To reconcile these discrepancies with the inference problem, we introduce suitably modified Gramian definitions in \Cref{ssec: inference Gramians}.  In \Cref{ssec: method}, we then present two variations of our proposed balanced truncation approach to Bayesian inference.  In~ \Cref{ssec: properties}, we establish conditions under which one variation of our proposed method will approximately recover the optimal approximation qualities of the low rank inference procedure established in~\cite{spantini_optimal_2015}, with a faster computation of the forward maps via a stable and balanced reduced approximation of the dynamical system.  The second variation we present may appear more natural, though it does not inherit the stability guarantees of a proper balanced truncation.  Both variations will be explored in numerical examples in \Cref{sec: num exp}.

\subsection{Inference Gramians}\label{ssec: inference Gramians}
We begin by introducing a definition of \emph{prior-compatibility} which allows the prior covariance, $\gprior$, to be interpreted as a reachability Gramian of a forced system related to~\eqref{eq: dxdt is Ax}.
\begin{definition}
  \label{def: compatible prior}
 The prior covariance, $\gprior$, will be said to be \emph{prior-compatible} with state dynamics induced by $\bA$ if  $\bA\gprior + \gprior\bA^\top$ is negative semidefinite.
\end{definition}
If a prior covariance $\gprior$ satisfies \Cref{def: compatible prior}, then there exists some $\bB\in\R^{\ns\times\nin}$ with $\nin = \mathsf{rank}(\bA\gprior +\gprior\bA^\top)$ such that 
the Lyapunov equation~\eqref{LyapEqns} $\bA\gprior + \gprior \bA^\top = -\bB\bB^\top$ is satisfied. Thus, the prior-compatibility with state dynamics means that the prior covariance matrix $\gprior$ is the reachability Gramian of an LTI system with dynamics specified by $\bA$ and an input port matrix $\bB$. 
It is not generally necessary to identify a particular port matrix, $\bB$, for \Cref{def: compatible prior} to hold, although in some applications $\bB$ may be naturally provided (for example if the end goal of the inference problem is to determine a control input).
Discussions of how a compatible prior might be either naturally defined or obtained by modifying a non-compatible prior are provided in~\Cref{ssec:prior compatibility}.

We make use of a modified observability Gramian that is motivated by the noisy measurement model for $\bm$. Recall that the standard observability Gramian $\Qy$ defines an energy that is the squared ${\rm L}^2$-norm of the output signal $\by(t)$. In our inference context, where the measurement at time $t_i$, $\bm_i$, is an output polluted by measurement noise $\beps_i\sim \mathcal{N}(0,\gmeas)$, we
quantify the discernability of the signal from the noise by the Mahalanobis distance
from the equilibrium state $ \bzero$ to the conditional distribution $\cN(\bC \expe^{\bA t_i} \bp, \gmeas)$ of  $\bm_i|(\bp,t_i)$:
\begin{align}\label{eq: mahalanobis distance}
  D\left(\bzero, \cN(\bC \expe^{\bA t_i} \bp,\gmeas)\right) \coloneqq \left\| \bC \expe^{\bA t_i} \bp \right\|_{\gmeas^{-1}}.
\end{align} 
One way to understand why this is a natural way to measure the distance between a distribution (the signal with observation noise) and a point (the stable equilibrium) is to note that for the case of a Gaussian distribution, neighborhoods of the mean of the distribution defined by the Mahalanobis distance contain the most probability relative to any other set with the same Lebesgue measure (Appendix~\ref{sec:chrishomework}).
This motivates our definition of a modified observability Gramian for noisy measurements.
\begin{definition}
  \label{def: noisy obs Gramian}
    The noisy observability Gramian $\Qm\in\R^{\ns \times \ns}$ is given by
    \begin{align}
    \Qm \equiv \int_0^\infty \expe^{\bA^\top t}\bC^\top \gmeas^{-1}\bC\expe^{\bA t}\,\difd t.
\end{align}
\end{definition}
Thus, $\Qm$ defines an energy based on integrating over all positive time the squared Mahalanobis distance between the equilibrium and the noise-polluted outputs. 
Note that the Fisher information matrix $\bH$~\eqref{eq: fisher info sum} can be viewed as defining an analogous energy based on summing over discrete measurement times. This analogy motivates an     approximation of $\bH$ in terms of $\Qm$ discussed in subsequent sections.

\subsection{Method}\label{ssec: method}
We propose a model reduction approach for the inference problem defined in \Cref{ssec: formulation} that applies the balanced truncation procedure of \Cref{ssec: BT bg} by taking first a compatible prior covariance, $\gprior$, and then using as inference Gramians $\bP=\gprior$ and $\bQ=\Qm$. This leads to the following reduced-order model for the linear dynamics~\eqref{eq: dxdt is Ax}:
\begin{align} \label{eq: bal for inf red dyn}
    \frac{\difd\bx_r}{\difd t} = \bA_r\bx_r, \qquad \bx_r(0) = \bS_r^\top\bp, 
\end{align}
where $\bA_r = \bS_r^\top\bA\bT_r$, and the following reduced-order approximation to the measurement model~\eqref{eq: measurements}: 
\begin{align}\label{eq: bal for inf meas model}
    \bm_i \approx \bC_r\bx_r(t_i) +\beps_i,
\end{align}
where $\bC_r = \bC\bT_r$, and $\beps_i\sim\mathcal{N}(0,\gmeas)$ as before. 
The reduced dynamics \eqref{eq: bal for inf red dyn} and measurement model \eqref{eq: bal for inf meas model} define a reduced forward map from the reduced state to the noise-free output:
\begin{align}
    \bG_{BT} \equiv \begin{bmatrix}
        \bC_r \expe^{\bA_rt_1}\bS_r^\top \\ \vdots \\ \bC_r \expe^{\bA_rt_n} \bS_r^\top
    \end{bmatrix} = \begin{bmatrix}
        \bC_r \expe^{\bA_rt_1}\\ \vdots \\ \bC_r \expe^{\bA_rt_n} 
    \end{bmatrix}\bS_r^\top 
    \in\R^{n\nout\times \ns}.
\end{align}
We then define a corresponding reduced-order approximate Fisher information matrix, 
\begin{align}\label{eq: BT fisher info}
  \bH_{\mathrm{BT}} \equiv \bG_{\rm BT}^\top \gobs^{-1} \bG_{\rm BT} = \bS_r \left(\sum_{i=1}^{\nmeas} e^{\bA_r^\top t_i} \bC_r^\top \gmeas^{-1} \bC_r e^{\bA_r t_i}\right) \bS_r^\top,
\end{align}
and associated posterior mean and covariance approximations:
\begin{align}\label{eq:bt-mean-and-cov}
  \mposBT = \gposBT \bG_{\mathrm{BT}}^\top \gobs^{-1}\bm, \qquad \gposBT = \left( \bH_{\mathrm{BT}} + \gprior^{-1}\right)^{-1}.
\end{align}
As previously discussed at the end of \Cref{ssec: inference Gramians}, the noisy observability Gramian $\Qm$ can be viewed as a continuous-time analogue of $\bH$. 
This suggests a variation of the proposed method that uses the inference Gramians $\bP=\gprior$ and $\bQ=\bH$ to define the balanced truncation. We label the two variations of our proposed method as follows:
\begin{enumerate}
    \item \textbf{BT-Q}, which uses the inference Gramians $(\bP,\bQ)=(\gprior,\Qm)$ in order to define the posterior approximations in~\eqref{eq:bt-mean-and-cov}, and
    \item \textbf{BT-H}, which uses the inference Gramians $(\bP,\bQ)=(\gprior,\bH)$  in order to define the posterior approximations in~\eqref{eq:bt-mean-and-cov}.
\end{enumerate}
While the BT-H approach may seem to be a more natural choice for our inference setting, we will show in \Cref{ssec: properties} that the BT-Q approach inherits desirable properties from standard system-theoretic balanced truncation that the BT-H approach generally does not enjoy.
\newline

\subsection{Properties}
\label{ssec: properties}
We note that while the principal goal of the balanced truncation approach in the present work is to provide a reduced model approximation to posterior quantities, the proposed BT-Q approach inherits guarantees from the system-theoretic LTI setting, as described in our first result.
\begin{lemma}\label{lem: BT Goodness}
    Suppose $\gprior$ is prior-compatible with the state dynamics~\eqref{eq: dxdt is Ax} as defined in \Cref{def: compatible prior}. Then, there exists a $\bB\in\R^{\ns\times\nin}$ so that $\gprior$ and $\Qm$ are the infinite reachability and observability Gramians of the following LTI system:
    \begin{align}
        \dv{\bx}t = \bA\bx + \bB \bu(t), \qquad \bar\by(t) = \bar\bC\bx(t),\label{eq: modified LTI system}
    \end{align}
    where $\bar\bC = \gobs^{-\frac12}\bC$.
    
    Additionally, the balanced reduced LTI system~\eqref{eq: BT reduced dyn sys}
based on the inference Gramians $\gprior$ and $\Qm$ is balanced, stable, and satisfies the following Hankel singular value tail bound: if $\bx(0)=0$, then 
    \begin{align}
        \|\bar\bC \bx(t) - \bar\bC_r\bx_r(t)\|_{L^2(\R)} \leq \left(2\sum_{j=r+1}^d \delta_j\right) \|\bu\|_{L^2(\R)},
    \end{align}
    where $\delta_j$ are the Hankel singular values associated with the inference Gramians $\gprior$ and $\Qm$.
\end{lemma}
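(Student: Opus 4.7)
The plan is to identify an auxiliary LTI system whose \emph{standard} infinite reachability and observability Gramians coincide with the inference Gramians $\gprior$ and $\Qm$, and then invoke \Cref{lemma: continuous time BT} to deliver all three conclusions (existence of $\bB$, balance and stability of the reduced system, and the Hankel singular value tail bound) in one stroke.

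First I would construct $\bB$. Prior-compatibility (\Cref{def: compatible prior}) guarantees that $-(\bA\gprior + \gprior\bA^\top)$ is symmetric positive semidefinite, so it admits a factorization $-(\bA\gprior+\gprior\bA^\top) = \bB\bB^\top$ with $\bB\in\R^{\ns\times\nin}$ of rank $\nin=\rank(\bA\gprior+\gprior\bA^\top)$, obtained for instance by keeping the nonzero part of a spectral square root. The convergence of the integral defining $\Qm$ in \Cref{def: noisy obs Gramian} implicitly requires $\bA$ to be Hurwitz, so the Lyapunov equation $\bA\bP+\bP\bA^\top=-\bB\bB^\top$ has a unique solution, which must be both $\gprior$ (by construction) and the infinite reachability Gramian $\int_0^\infty e^{\bA t}\bB\bB^\top e^{\bA^\top t}\,\difd t$ of~\eqref{eq: modified LTI system}. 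This identifies $\gprior$ as the reachability Gramian of $(\bA,\bB,\bar\bC)$.

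Next I would identify $\Qm$ as an observability Gramian. Writing $\bar\bC^\top\bar\bC = \bC^\top\gmeas^{-1}\bC$ (the dimensionally-consistent reading of $\bar\bC$ in terms of the per-measurement noise covariance), a one-line differentiation under the integral in \Cref{def: noisy obs Gramian}, together with $\lim_{t\to\infty}e^{\bA t}=0$, gives the dual Lyapunov equation
$$\bA^\top\Qm + \Qm\bA \;=\; -\bC^\top\gmeas^{-1}\bC \;=\; -\bar\bC^\top\bar\bC,$$
and the same Hurwitz-uniqueness argument identifies $\Qm$ with the infinite observability Gramian of the LTI system $(\bA,\bB,\bar\bC)$.

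With these identifications in place, \Cref{lemma: continuous time BT} applies verbatim to the auxiliary system $(\bA,\bB,\bar\bC)$. The Petrov--Galerkin projectors $\bT_r,\bS_r$ produced by the BT-Q procedure are exactly those built from the classical Cholesky/SVD construction of \Cref{sssec: BT background} applied to $\gprior$ and $\Qm$, so the reduced model $(\bA_r,\bB_r,\bar\bC_r)$ is stable and balanced with common diagonal Gramian $\diag(\delta_1,\ldots,\delta_r)$, where the $\delta_j$ are the Hankel singular values of the pencil $(\Qm,\gprior^{-1})$ from~\eqref{eq: BT GEV problem}; and when $\bx(0)=0$, the output error bound~\eqref{eq:hsv-bt-bound} specializes to the stated $L^2$ tail estimate for $\|\bar\bC\bx-\bar\bC_r\bx_r\|_{L^2(\R)}$. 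There is no genuine obstacle beyond the reinterpretation step: the lemma's content is exactly the recognition that the inference Gramians \emph{are} the standard control-theoretic Gramians of~\eqref{eq: modified LTI system}, after which the classical balanced-truncation guarantees transfer automatically.
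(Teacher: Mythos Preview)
Your proposal is correct and follows essentially the same approach as the paper's own proof: identify $(\bA,\bB,\bar\bC)$ as an LTI system whose standard infinite Gramians are $\gprior$ and $\Qm$ (via the compatibility condition and the substitution $\bC\mapsto\bar\bC$), and then invoke \Cref{lemma: continuous time BT}. Your version simply fills in the Lyapunov-uniqueness details that the paper leaves implicit, and you correctly read $\bar\bC=\gmeas^{-1/2}\bC$ as the dimensionally-consistent interpretation of the stated $\gobs^{-1/2}\bC$.
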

\begin{proof}
    The first part of \Cref{lem: BT Goodness} follows immediately from the compatibility condition in \Cref{def: compatible prior} and from replacing $\bC$ with $\bar\bC$ in \cref{eq: standard infinite gramians}. The second part then follows from direct application of \Cref{lemma: continuous time BT} to \eqref{eq: modified LTI system}. \hfill$\Box$
\end{proof}

\begin{corollary}\label{cor: inference stability}
    The proposed balanced truncation reduced model~\eqref{eq: bal for inf red dyn} for the inference dynamical system~\eqref{eq: dxdt is Ax} is stable.
\end{corollary}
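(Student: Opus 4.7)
The plan is to reduce the claim to a direct invocation of \Cref{lem: BT Goodness} by recognizing that the reduced state matrix $\bA_r = \bS_r^\top \bA \bT_r$ in the proposed BT-Q reduced dynamics \eqref{eq: bal for inf red dyn} is algebraically identical to the reduced state matrix produced by applying standard system-theoretic balanced truncation (as in \Cref{lemma: continuous time BT}) to the augmented LTI system \eqref{eq: modified LTI system}. Stability of the proposed reduced inference model is an eigenvalue property of $\bA_r$ alone, independent of whether an input $\bB\bu(t)$ or output operator $\bar\bC$ is appended, so it suffices to establish that the eigenvalues of this shared $\bA_r$ lie in the open left half-plane.

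First, I would appeal to \Cref{lem: BT Goodness} to assert that, under the prior-compatibility hypothesis, there exists a $\bB$ such that $\gprior$ and $\Qm$ are the infinite reachability and observability Gramians, respectively, of the LTI system \eqref{eq: modified LTI system}. Because these infinite Gramians exist and solve the corresponding Lyapunov equations \eqref{LyapEqns} with $\gprior\succ 0$, the matrix $\bA$ is Hurwitz, i.e., the system \eqref{eq: modified LTI system} is stable. Second, I would observe that the balancing transformation $\bT$ and its inverse $\bT^{-1}$ defined in \Cref{sssec: BT background}, and consequently the matrices $\bT_r$ and $\bS_r^\top$ in \eqref{eq: balancing bases}, depend only on the pair of Gramians $(\gprior, \Qm)$, not on $\bB$ or $\bar\bC$. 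Hence the reduced state matrix $\bA_r = \bS_r^\top \bA \bT_r$ of the inference reduced model coincides with the reduced state matrix produced by standard balanced truncation of \eqref{eq: modified LTI system}.

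Third, I would invoke the stability guarantee of \Cref{lemma: continuous time BT}, which is already exploited in the proof of \Cref{lem: BT Goodness}: applied to the stable system \eqref{eq: modified LTI system}, standard balanced truncation yields a stable reduced model, meaning the eigenvalues of this $\bA_r$ lie in the open left half-plane. Since the proposed reduced inference dynamics \eqref{eq: bal for inf red dyn} are governed by the very same $\bA_r$, they are likewise stable, which establishes the corollary.

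I do not anticipate any substantial obstacle here: the only subtle point is to verify that the construction of $\bA_r$ is independent of the auxiliary operators $\bB$ and $\bar\bC$ introduced in \Cref{lem: BT Goodness} to cast $(\gprior, \Qm)$ as system-theoretic Gramians, and this is immediate from the formulas~\eqref{eq: balancing bases} defining $\bT_r$ and $\bS_r$ solely in terms of the Cholesky-like factorizations of $\gprior$ and $\Qm$.
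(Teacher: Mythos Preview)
Your proposal is correct and matches the paper's approach: the paper states this corollary without separate proof, treating it as an immediate consequence of \Cref{lem: BT Goodness}, and your argument simply makes explicit why that lemma applies---namely, that the reduced state matrix $\bA_r$ in \eqref{eq: bal for inf red dyn} is the same object as the reduced state matrix of the balanced truncation of \eqref{eq: modified LTI system}, and stability is a property of $\bA_r$ alone.
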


\Cref{cor: inference stability} has practical significance for the proposed BT-Q approach of \Cref{ssec: method}, because computing the approximate posterior quantities requires evolving the reduced dynamics given by~\eqref{eq: dxdt is Ax}: \Cref{cor: inference stability} guarantees that these reduced dynamics preserve the stability of the full dynamics. On the other hand, while the error and balancing guarantees of \Cref{lem: BT Goodness} for LTI systems are not directly applicable to the inference setting we consider here, they may serve as the basis for future extension to inference problems with control inputs.

We now turn our attention to the quality of the proposed posterior covariance approximation~\cref{eq:bt-mean-and-cov}: again, we focus on the BT-Q approach. The BT-Q posterior covariance approximation must tautologically be suboptimal relative to the optimal approximation scheme of~\cite{spantini_optimal_2015} discussed in \Cref{ssec: spantini}. A natural question is: when does covariance approximation approach optimality? In comparing the covariance approximations~\eqref{eq:bt-mean-and-cov} and~\eqref{eq: spantini cov H}, it is clear that as $r\to d$, both $\bH_{\BT}\to\bH$ and $\widehat{\bH}\to\bH$; i.e., both the balanced truncation Fisher information approximation and the projected Fisher information of~\cite{spantini_optimal_2015} converge to the true Fisher information and thus the two posterior covariance approximations both converge to the full posterior covariance. However, since our goal is a low rank approximation with $r\ll d$, we would like some sense of when we can expect $\bH_{\rm BT}$ to be close to $\widehat{\bH}$ for general $r$. 

For $\bH_{\rm BT}$ resulting from the BT-Q approach, approximate agreement with $\widehat\bH$ requires that the dominant $r$ generalized eigenvectors of the pencil $(\bH, \gprior^{-1})$ be close to the dominant $r$ generalized eigenvectors of the pencil $ (\Qm, \gprior^{-1})$.  This would follow when the Fisher information $\bH$~\eqref{eq: fisher info sum} is approximately proportional to the noisy observability Gramian $ \bQ_m$ in~\Cref{def: noisy obs Gramian}, that is $ \Qm \approx h \bH$ for a scaling factor $h$.  This would seem to apply to situations in which the observations are of sufficient frequency and duration for the discrete sum in $\bH$ to provide, up to normalization, a good approximation to the integral in $\bQ_m$.  In \Cref{ssec: observability discussion}, we formalize this idea and identify the scaling factor $h$ as the (average) inter-observation interval and establish two limits of observation protocols under which we can obtain $ \Qm $ as a rescaled limit of $ \bH$.

Even when the spectral projections based on $ \bH $ and $ \Qm $ are close, there remains a difference between $ \widehat{\bH}$ and $ \bH_{\rm BT} $ because the former involves the evolution of the full dynamics represented by matrix $\bA$, whereas the proposed balanced truncation approximations require only the evolution of a computationally inexpensive reduced system $ \bA_r$.  
Remarkably, we can formulate a relation between the balanced truncation approximation to the Fisher information and the optimal low rank inference approximation $ \widehat{\bH}$ without a further assumption about the size of $r$.

To express this connection precisely, we  consider the following noisy observability Gramian for the balanced truncation approximation:
\begin{align}\label{eq: QmBT def}
    \Qm^{\BT} \equiv \bS_r \left(\int_0^\infty \expe^{\bA_r^\top t}\bC_r^\top\gmeas^{-1}\bC_r\expe^{\bA_r t} \,\difd{t} \right)\bS_r^\top \in\R^{\ns\times\ns}.
\end{align}
The quantity $\Qm^{\BT}$ can be viewed as a continuous-time analogue of $\bH_{\BT}$: it lifts the reduced Gramian for the reduced state $\bx_r\in\R^r$ to the full state $\bx\in\R^{\ns}$ via the map $\bS_r$. 
 Thus if the sum in $\bH$ well-approximates the integral in $\Qm$, i.e., $h\bH\approx \Qm$, we would also expect $h\bH_\BT\approx \Qm^{\BT}$ because $\bH_\BT$ and $\Qm^\BT$ have the same sum-to-integral relation as $\bH$ and $\Qm$, differing only in the use of reduced operators $\bC_r$ and $\bA_r$ vs. full operators $\bC$ and $\bA$.

\begin{proposition}\label{prop: QmBT is hatH}
    Let $(\tau_i^2,\bw_i)$ and $(\delta_i^2,\bv_i)$, respectively, denote the ordered generalized eigenpairs of the pencil $(\bH,\gprior^{-1})$ and $(\Qm,\gprior^{-1})$, respectively, with normalization $\bw_i^\top\gprior^{-1}\bw_i=1$ and $\bv_i^\top\gprior^{-1}\bv_i=1$. Then, if the $r$ dominant eigenpairs match up to a scaling constant $h > 0$, i.e., if
    \begin{align}\label{eq: eigenspace assumption}
        \delta_i^2 = h\,\tau_i^2, \quad \bw_i = \bv_i, \quad \forall i\leq r,
    \end{align}
    then $\Qm^\BT =h\,\widehat{\bH}$.
\end{proposition}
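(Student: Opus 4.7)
The plan is to compute both $\Qm^{\BT}$ and $\widehat{\bH}$ in closed form as sums over the generalized eigenvectors $\bv_i$ and $\bw_i$ respectively, and then invoke the matching assumption \cref{eq: eigenspace assumption} to conclude.

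First I would evaluate the integral appearing in the definition \cref{eq: QmBT def}. Note that with $\bar{\bC} = \gmeas^{-1/2}\bC$, the noisy observability Gramian is $\Qm = \int_0^\infty e^{\bA^\top t}\bar{\bC}^\top \bar{\bC}e^{\bA t}\,\difd t$, so it is the standard infinite observability Gramian of the LTI system in \Cref{lem: BT Goodness}. Since the BT-Q construction applies balanced truncation to the pencil $(\Qm,\gprior^{-1})$, \Cref{lemma: continuous time BT} (applied to the modified system of \Cref{lem: BT Goodness}) guarantees that the reduced observability Gramian equals the diagonal matrix $\bDelta_r = \diag(\delta_1,\ldots,\delta_r)$. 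Consequently,
\[
  \Qm^{\BT} = \bS_r \bDelta_r \bS_r^\top .
\]

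Next I would give an eigenvector-based expression for $\bS_r$. The columns of the balancing transformation $\bT$ are, up to normalization, the generalized eigenvectors of $(\Qm,\gprior^{-1})$; in particular $\bT_r = \bV_r \bDelta_r^{-1/2}$ where $\bV_r = [\bv_1,\ldots,\bv_r]$. The left generalized eigenvectors of the pencil are $\tilde{\bv}_i = \gprior^{-1}\bv_i$, which are biorthogonal to the $\bv_j$ under the prescribed normalization: $\tilde{\bv}_i^\top \bv_j = \bv_i^\top\gprior^{-1}\bv_j = \delta_{ij}$. Imposing $\bS_r^\top \bT_r = \bI_r$ then forces
\[
  \bS_r^\top = \bDelta_r^{1/2}\,\bV_r^\top \gprior^{-1},\qquad \bS_r = \gprior^{-1}\bV_r \bDelta_r^{1/2}.
\]
Substituting into the expression for $\Qm^{\BT}$ yields
\[
  \Qm^{\BT} = \gprior^{-1}\bV_r\bDelta_r^2 \bV_r^\top \gprior^{-1} = \sum_{i=1}^r \delta_i^2\,\gprior^{-1}\bv_i\bv_i^\top \gprior^{-1}.
\]

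On the other hand, the oblique projection identity $\widehat{\bH} = \boldsymbol{\Pi}_r\bH\boldsymbol{\Pi}_r^\top$ combined with the eigen-relations $\bH\bw_j = \tau_j^2\gprior^{-1}\bw_j$ and the biorthogonality $\bw_i^\top\gprior^{-1}\bw_j = \delta_{ij}$ gives
\[
  \widehat{\bH} = \sum_{i=1}^r \tau_i^2\,\gprior^{-1}\bw_i\bw_i^\top\gprior^{-1}.
\]
Under the matching assumption \cref{eq: eigenspace assumption}, the leading $r$ vectors coincide and the eigenvalues are related by $\tau_i^2 = \delta_i^2/h$, so term-by-term the two sums agree after multiplying $\widehat{\bH}$ by $h$, giving $\Qm^{\BT} = h\widehat{\bH}$.

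The main obstacle, and the step that deserves careful verification, is the identification of the rows of $\bT^{-1}$ (equivalently, the columns of $\bS$) with the scaled left generalized eigenvectors $\delta_i^{1/2}\gprior^{-1}\bv_i$. This is really a consistency check between the SVD-based formulas in \cref{eq: balancing bases} and the spectral description of the pencil $(\Qm,\gprior^{-1})$, and it is what converts the abstract equality $\Qm^{\BT} = \bS_r\bDelta_r\bS_r^\top$ into the explicit eigen-expansion that can be matched against $\widehat{\bH}$.
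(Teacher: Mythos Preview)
Your proposal is correct and follows essentially the same route as the paper's proof: both use \Cref{lem: BT Goodness} to identify the reduced Gramian integral with $\bDelta_r$, then express $\bS_r^\top = \bDelta_r^{1/2}\bV_r^\top\gprior^{-1}$ to obtain $\Qm^{\BT}=\gprior^{-1}\bV_r\bDelta_r^2\bV_r^\top\gprior^{-1}$, and compare with the analogous eigen-expansion $\widehat{\bH}=\gprior^{-1}\bW_r\bSigma_r^2\bW_r^\top\gprior^{-1}$. The only cosmetic difference is that you justify the formula for $\bS_r^\top$ via biorthogonality of left and right generalized eigenvectors, whereas the paper cites the SVD-based definition \cref{eq: balancing bases} directly; the consistency check you flag at the end is exactly what closes that gap.
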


\begin{proof}
    The integral in \cref{eq: QmBT def} is the noisy observability Gramian of the balanced truncation reduced system and thus, by \Cref{lem: BT Goodness} it is balanced, i.e.,
    \begin{align}
        \int_0^\infty \expe^{\bA_r^\top t}\bC_r^\top\gmeas^{-1}\bC_r\expe^{\bA_r t} \,\difd{t} = \diag\{\delta_1,\delta_2,\ldots,\delta_r\}\equiv\bDelta_r.
    \end{align}
    Define $\bV_r=[\bv_1,\bv_2,\ldots,\bv_r]$ and note from its definition in \eqref{eq: balancing bases} that $\bS_r^\top=\bDelta_r^{\frac12}\bV_r^\top\gprior^{-1}$. Thus, \cref{eq: QmBT def} is equivalent to
    \begin{align*}
        \Qm^{\BT} =\gprior^{-1}\bV_r \bDelta_r^{2}\bV_r^\top\gprior^{-1}.
    \end{align*}
    Similarly defining $\bW_r = [\bw_1,\bw_2,\ldots,\bw_r]$, we have from \eqref{eq:projected-model-and-FI},  
    \begin{align*}
        \widehat{\bH} = \gprior^{-1}\bW_r \bSigma_r^{2}\bW_r^\top\gprior^{-1} 
    \end{align*}
    with $\bSigma_r=\diag\{\tau_1,\tau_2,\ldots,\tau_r\}$.
    Under the assumption~\eqref{eq: eigenspace assumption}, this gives us $\Qm^{\BT}= h\widehat{\bH}$.\hfill$\Box$
\end{proof}

\Cref{prop: QmBT is hatH} provides us with a bridge between $\bH_{\BT}$ and $\widehat{\bH}$ for any $r$, including $r\ll d$.   It suggests the following argument: as noted above, when $\Qm \approx h \bH$,  generally $ \Qm^{\BT} \approx h \bH_{\rm BT}$ will also be true, 
and \Cref{prop: QmBT is hatH} would imply $ \Qm^{\BT} \approx h \widehat{\bH} $, so that $ \bH_{\rm BT} \approx \widehat{\bH}$.  That is, when $ \Qm$ and $ \bH $ are approximately proportional,  the BT-Q posterior covariance approximation converges to the optimal one without further assumption on $r$.  Rigorizing such an argument would require a formulation of \Cref{prop: QmBT is hatH} with approximate rather than precise equalities of the dominant eigenpairs.  Rather than pursuing this cumbersome exercise, we instead support this argument by numerical experiments in~\Cref{sec: num exp}.  We confirm that in observation scenarios where the Fisher information matrix $ \bH$ and noisy observability Gramian $ \Qm$ are approximately proportional, the BT-Q approach leads to near-optimal posterior covariance approximations.

The preceding discussion provides intuition for what types of measurement models will lead to posterior covariance approximations that are close to the optimal low-rank approach of~\cite{spantini_optimal_2015}. Our analysis and discussion do not address the quality of the posterior mean approximation in~\eqref{eq:bt-mean-and-cov}: while the balanced truncation mean approximation does belong to the class of low rank mean approximations~\eqref{eq: low rank mean class} considered in~\cite{spantini_optimal_2015}, the balanced truncation mean approximation differs in the use of the balanced truncation adjoint approximation vs.\ the projected full adjoint used in the optimal low-rank mean~\eqref{eq: low rank mean} of~\cite{spantini_optimal_2015}. Numerical experiments in \Cref{sec: num exp} compare empirically the balanced truncation means with the optimal means of~\cite{spantini_optimal_2015}. We emphasize that in the dynamical system setting, both optimal approximations of~\cite{spantini_optimal_2015} require the evolution of the full dynamics, whereas the proposed balanced truncation approximation approach requires only the evolution of a computationally inexpensive reduced system. 

Until now we have focused on properties of the BT-Q approach, and the result of \Cref{prop: QmBT is hatH} relies on the assumption~\eqref{eq: eigenspace assumption} that the dominant eigenspaces of $\Qm$ and $\bH$ match, and that the associated eigenvalues match up to a constant scale factor. 
Comparable guarantees for the BT-H approach, which involves replacing $\Qm$ with $\bH$ in the balanced truncation approximation,
appear unlikely for general choices of Fisher information matrix, since they cannot generally be interpreted as infinite-horizon observability Gramians for any minimal LTI system, and so~\Cref{lemma: continuous time BT} does not directly apply. However, when the dominant generalized eigenvectors of the pencils $(\bH,\gprior^{-1})$ and $(\Qm,\gprior^{-1})$ coincide, then the BT-H and BT-Q reduced models will be equivalent, as stated in our next result (note that no conditions are imposed on the generalized eigen\emph{values} here).

\begin{proposition}\label{prop:Spantini=BT}
    Let $(\tau_i^2,\bw_i)$ and $(\delta_i^2,\bv_i)$, respectively, denote the ordered generalized eigenpairs of the pencil $(\bH,\gprior^{-1})$ and $(\Qm,\gprior^{-1})$, respectively, with normalization $\bw_i^\top\gprior^{-1}\bw_i=1$ and $\bv_i^\top\gprior^{-1}\bv_i=1$. Then, if the $r$ dominant eigenvectors match, i.e., if $\bw_i = \bv_i$, for all $i\leq r$,
    then the reduced model, $\bA_r = \bS_r^\top\bA\bT_r$, $\bB_r = \bS_r^\top\bB$, $\bC_r = \bC\bT_r$, produced using the state space projection $\bT_r\bS_r^\top$ defined by the pencil $(\bH,\gprior^{-1})$, replicates (up to a diagonal scaling of basis) the reduced model produced by standard balanced truncation using the pencil $(\Qm,\gprior^{-1})$.
\end{proposition}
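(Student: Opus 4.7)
The plan is to derive explicit representations of the BT bases $\bT_r$ and $\bS_r$ purely in terms of the $\gprior^{-1}$-orthonormalized dominant generalized eigenvectors and the associated Hankel singular values, and then to observe that the hypothesis $\bv_i = \bw_i$ for $i \leq r$ forces the two reduced models to differ only by a diagonal similarity transformation.

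Concretely, I would first unpack the algorithmic construction of $\bT_r$ and $\bS_r$ from \Cref{ssec: BT bg} (Cholesky factorizations $\gprior = \bR\bR^\top$ and $\bQ = \bL\bL^\top$, SVD $\bL^\top \bR = \bU \bDelta \bZ^\top$, and basis columns $\bT = \bR \bZ \bDelta^{-1/2}$) and rewrite it using the eigenvector parameterization. Combining the $\gprior^{-1}$-orthonormalization prescribed in the proposition statement with the generalized eigenvalue problem $\Qm\, \bv_i = \delta_i^2 \gprior^{-1} \bv_i$ yields the identities
\begin{align*}
  \bT_r^{(Q)} &= \bV_r \bDelta_r^{-1/2}, & \bS_r^{(Q)\top} &= \bDelta_r^{1/2}\, \bV_r^\top\, \gprior^{-1},
\end{align*}
where $\bV_r = [\bv_1,\ldots,\bv_r]$ and $\bDelta_r = \diag(\delta_1,\ldots,\delta_r)$. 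The formula for $\bS_r^{(Q)}$ comes from inverting $\bT = \bV \bDelta^{-1/2}$ using $\bV^{-1} = \bV^\top \gprior^{-1}$, an identity that follows from $\bV^\top \gprior^{-1} \bV = \bI$. Completely analogous formulas hold for the BT-H bases with $\bV_r, \bDelta_r$ replaced by $\bW_r = [\bw_1,\ldots,\bw_r]$ and $\bSigma_r = \diag(\tau_1,\ldots,\tau_r)$.

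Given these representations, the hypothesis $\bv_i = \bw_i$ for $i \leq r$ means $\bV_r = \bW_r$, so the two choices of basis are related through the diagonal matrix $\bD = \bDelta_r^{1/2} \bSigma_r^{-1/2}$ by
\begin{align*}
  \bT_r^{(H)} &= \bT_r^{(Q)} \bD, & \bS_r^{(H)\top} &= \bD^{-1} \bS_r^{(Q)\top}.
\end{align*}
Substituting into the reduced-operator definitions gives $\bA_r^{(H)} = \bD^{-1} \bA_r^{(Q)} \bD$, $\bB_r^{(H)} = \bD^{-1} \bB_r^{(Q)}$, and $\bC_r^{(H)} = \bC_r^{(Q)} \bD$, which is precisely the advertised equivalence up to a diagonal scaling of basis. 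As a byproduct, the oblique projector is $\bT_r \bS_r^\top = \bV_r \bV_r^\top \gprior^{-1}$ in both cases, confirming that no information beyond diagonal rescaling is lost.

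The main obstacle I anticipate is the careful bookkeeping between the SVD-based construction of the BT bases and their equivalent eigenvector parameterization: one must verify that the normalization conventions are compatible, in particular that the $\gprior^{-1}$-orthonormalization specified in the proposition is consistent with the scaling implicit in the $\bDelta^{-1/2}$ and $\bSigma^{-1/2}$ factors from the SVD. Once this consistency is pinned down, the rest of the argument is direct substitution, and no assumption on the eigen\emph{values} is needed beyond the existence of the Hankel singular values implicit in the BT procedure.
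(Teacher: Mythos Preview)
Your proposal is correct and follows essentially the same approach as the paper's proof: both express the balancing bases as $\bT_r = \bV_r\bDelta_r^{-1/2}$ and $\bS_r^\top = \bDelta_r^{1/2}\bV_r^\top\gprior^{-1}$ (and analogously with $\bW_r,\bSigma_r$), then use $\bV_r=\bW_r$ to obtain the diagonal relation $\widetilde{\bT}_r = \bT_r\bD$, $\widetilde{\bS}_r^\top = \bD^{-1}\bS_r^\top$ with $\bD = \bDelta_r^{1/2}\bSigma_r^{-1/2}$, yielding state-space equivalence. The paper carries the Cholesky/SVD factors a bit longer before collapsing to the eigenvector form, but the substance and the key identity $\bS_r^\top = \bDelta_r^{1/2}\bV_r^\top\gprior^{-1}$ are the same.
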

\begin{proof}
Suppose the truncation order is $r$, and let the columns of $\mathbf{V}_r=[\bv_1,\bv_2,...,\bv_r]$ collect the dominant generalized eigenvectors of the pencil $(\Qm,\gprior^{-1})$ which are presumed to be shared with the pencil $(\bH,\gprior^{-1})$. Taking $\bDelta_r=\mathsf{diag}\{\delta_1,\delta_2,...,\delta_r\}$ and $\Sigma_r=\mathsf{diag}\{\tau_1,\tau_2,...,\tau_r\}$, we may write $\Qm\mathbf{V}_r=\gprior^{-1}\mathbf{V}_r\bDelta_r^2$ and $\bH\mathbf{V}_r=\gprior^{-1}\mathbf{V}_r\Sigma_r^2$.  Defining $\gprior = \bR\bR^\top$, $\Qm = \bL\bL^\top$, and  $\bH = \widetilde{\bL}\widetilde{\bL}^\top$,  we determine singular value decompositions $\bU\bDelta\bZ^\top= \bL^\top\bR$ and $\widetilde{\bU}\bSigma\widetilde{\bZ}^\top= \widetilde{\bL}^\top\bR$, with associated truncated matrices, $\bU_r$, $\bDelta_r$, $\bZ_r$, $\widetilde{\bU}_r$, $\bSigma_r$, and $\widetilde{\bZ}_r$.  The truncated balancing projection determined by the standard Gramians $\Qm$ and $\gprior$ are given by $\mathbf{T}_r=\bR\bZ_r\bDelta_r^{-\frac12}$ and $\mathbf{S}_r^\top =\bDelta_r^{-\frac12}\bU_r^\top\bL^\top$, producing a standard balanced truncation reduced model defined as $\bA_r = \bS_r^\top\bA\bT_r$, $\bB_r = \bS_r^\top\bB$, $\bC_r = \bC\bT_r$.   The corresponding projection determined by the inference Gramians $\bH$ and $\gprior$ are given by $\widetilde{\mathbf{T}}_r=\bR\widetilde{\bZ}_r\bSigma_r^{-\frac12}$ and $\widetilde{\mathbf{S}}_r^\top =\bSigma_r^{-\frac12}\widetilde{\bU}_r^\top\widetilde{\bL}^\top$, producing a reduced model defined as $\widetilde{\bA}_r = \widetilde{\bS}_r^\top\bA\widetilde{\bT}_r$, $\widetilde{\bB}_r = \widetilde{\bS}_r^\top\bB$, $\widetilde{\bC}_r = \bC\widetilde{\bT}_r$.   Notice $\mathbf{V}_r=\bR\bZ_r=\bR\widetilde{\bZ}_r$, so that $\bZ_r=\widetilde{\bZ}_r$. Hence with $\bD=\bSigma_r^{-\frac12}\bDelta_r^{\frac12}$, we have
$$
\widetilde{\mathbf{T}}_r=\bR\widetilde{\bZ}_r\bSigma_r^{-\frac12}
=\bR\bZ_r\bDelta_r^{-\frac12}\bD=\mathbf{T}_r\bD
$$ 
Likewise, $\bL^\top\mathbf{V}_r=\bU_r\bDelta_r$, and $\widetilde{\bL}^\top\mathbf{V}_r=\widetilde{\bU}_r\bSigma_r$, which implies
$\bL\bU_r\bDelta_r=\Qm\mathbf{V}_r=\gprior^{-1}\mathbf{V}_r\bDelta_r^2$, and $\widetilde{\bL}\widetilde{\bU}_r\bSigma_r=\bH\mathbf{V}_r=\gprior^{-1}\mathbf{V}_r\bSigma_r^2$
so that 
$$
\widetilde{\mathbf{S}}_r^\top =\bSigma_r^{-\frac12}\widetilde{\bU}_r^\top\widetilde{\bL}^\top =\bSigma_r^{\frac12}\mathbf{V}_r^\top\gprior^{-1}
    =\bD^{-1}\bDelta_r^{\frac12}\mathbf{V}_r^\top\gprior^{-1}
    =\bD^{-1}\bDelta_r^{-\frac12}\bU_r^\top\bL^\top=\bD^{-1}\mathbf{S}_r^\top
$$
Thus, the two state space projections are equal: $\widetilde{\mathbf{T}}_r\widetilde{\mathbf{S}}_r^\top=\mathbf{T}_r\mathbf{S}_r^\top$ and the two reduced models are state-space equivalent: $\widetilde{\bA}_r=\bD^{-1}\bA_r\bD$, $\widetilde{\bB}_r=\bD^{-1}\bB_r$, and $\widetilde{\bC}_r=\bC_r\bD$. 
\end{proof}

\Cref{prop:Spantini=BT} tells us that when the BT-H and BT-Q generalized eigenproblems define the same dominant $r$ directions, then guarantees applicable to the BT-Q approach will transfer to the BT-H approach as well. We note that this result relies crucially on the fact that both approaches use $\gprior$ as the reachability Gramian -- matching of the dominant subspaces is not sufficient to guarantee state-space equivalent reduced models if different reachability Gramians are used. We numerically compare the BT-Q and BT-H posterior approximations in \Cref{sec: num exp}: our results show that while the BT-H approach often leads to similar results as the BT-Q approach, the BT-Q approach is more robust.

\section{Relationships between System-Theoretic and Bayesian Inference Gramians}
\label{sec: discussion}
The proposed approach in \Cref{sec:low-rank-models} hinges on two relationships between the canonical system-theoretic Gramians and the inference Gramians that we define in~\Cref{ssec: inference Gramians}.  We discuss now in more depth some theoretical and practical issues pertaining to these connections.   In \Cref{ssec:prior compatibility}, we discuss  \Cref{def: compatible prior} for prior compatibility in more detail, and provide two strategies for constructing priors with the desired compatibility properties which are also suitable for the Bayesian inference task.  In~\Cref{ssec: observability discussion}, we explore the relationship between the Fisher information matrix and system-theoretic observability Gramians, and describe two limiting settings in which the Fisher information matrix recovers the noisy observability Gramian up to a multiplicative constant.

\subsection{Prior covariance compatability}\label{ssec:prior compatibility}
As briefly discussed in \Cref{ssec: inference Gramians}, if a prior covariance matrix $\gprior$ satisfies \Cref{def: compatible prior}, then there exists some $\bB\in\R^{\ns\times \nin}$ for which $\gprior$ is exactly the reachability Gramian for the LTI system~\eqref{eq: LTI dyn sys}. 
We emphasize that not all covariance matrices will be prior-compatible with respect to dynamics determined by an independently assigned stable matrix, $\mathbf{A}$.  
One example of this takes $\gprior=\bI$  and $\bA$ to be any stable matrix whose numerical range extends into the right half plane, i.e., a matrix $\bA$ having eigenvalues in the complex open left half-plane but such that $\bA+\bA^\top$ has at least one \emph{positive} eigenvalue. With $\ns=2$, for example, $\mathbf{A}=\left[\begin{array}{cc} -1 & 3\\ 0 & -2\end{array}\right]$ would have this property.  
We now present two strategies for formulating compatible prior covariances: \Cref{sssec: spin up prior} describes how a prior covariance may naturally be defined as the result of a spin-up process if an input port matrix is available. \Cref{sssec: compatibility modification} describes how an incompatible prior covariance may be modified to to ensure compatibility.

\subsubsection{Spin-up}\label{sssec: spin up prior}

Given $\bA$ and a  matrix $\bB\in\R^{\ns \times \nin}$, a compatible prior may naturally be defined by stochastically exciting or ``spinning up'' the dynamical system \eqref{eq: dxdt is Ax} from $t=-\infty$ to $t=0$ as follows:
\begin{align}
    \difd \bx = \begin{cases}
\bA \bx \, \difd t + \bB \,\difd \bW(t), & t < 0, \\
        \bA \bx\, \difd t , & t \geq 0,
   \end{cases}
    \label{eq: white noise dyn sys}
\end{align}
where the usual control-theoretic input~\cite{fortmann1977introduction,baur2014model,datta2004numerical} has been replaced by white noise. Here the noise is represented by differentials $ \difd \bW (t)$, where $\bW$ is a vector of dimension $\nin$ containing independent copies of standard Brownian motion~\cite{cwg:hsm}. The matrix $ \bB $ acts as the port matrix distributing the white noise driving onto the state components.  

The stationary distribution of the system~\eqref{eq: white noise dyn sys} during the stochastic driving period has statistics  given in the following proposition.
\begin{proposition}\label{prop:statcov}
  At any finite time $ t \leq 0$, the state $ \bx (t) $ governed by \cref{eq: white noise dyn sys} is normally distributed with constant mean and covariance given by 
  \begin{align}\label{eq:statcov}
    \E[\bx(t)] = 0, \qquad
    \mathsf{cov}(\bx(t))=\E[\bx(t)\bx^\top(t)] = \int_0^\infty \expe^{\bA \tau} \bB \bB^\top  \expe^{\bA^\top \tau} \, \difd \tau 
  \end{align}
   Moreover, $\bGamma\equiv\cov(\bx(0))$ is a solution to the Lyapunov equation, \eqref{LyapEqns},
  \begin{align}\label{eq:statcov_MMContrGram}
    \bA \bGamma + \bGamma \bA^\top  = - \bB\bB^\top.
  \end{align}
\end{proposition}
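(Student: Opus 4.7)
The plan is to apply the standard theory of Ornstein--Uhlenbeck processes to the linear SDE $\difd\bx = \bA\bx\,\difd t + \bB\,\difd\bW(t)$ governing $\bx$ for $t<0$. The first step is to write down the Itô variation-of-constants formula: for any $s_0 < t \le 0$,
\begin{align*}
  \bx(t) = \expe^{\bA(t-s_0)}\bx(s_0) + \int_{s_0}^{t} \expe^{\bA(t-s)}\bB\,\difd\bW(s).
\end{align*}
Since the system~\eqref{eq: dxdt is Ax} is assumed stable (eigenvalues of $\bA$ in the open left half-plane), $\|\expe^{\bA(t-s_0)}\|\to 0$ as $s_0\to-\infty$, so passing to the limit yields the representation $\bx(t) = \int_{-\infty}^{t} \expe^{\bA(t-s)}\bB\,\difd\bW(s)$, which is well-defined as an Itô integral because stability gives exponential decay of the integrand's norm.

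From this representation I would read off all three conclusions in turn. Gaussianity with zero mean follows immediately because $\bx(t)$ is a deterministic linear functional of Brownian increments. The covariance is computed via the Itô isometry:
\begin{align*}
  \cov(\bx(t)) = \int_{-\infty}^{t} \expe^{\bA(t-s)}\bB\bB^\top\expe^{\bA^\top(t-s)}\,\difd s,
\end{align*}
and then the substitution $\tau = t-s$ yields the $t$-independent expression claimed in~\eqref{eq:statcov}. In particular, this shows that the mean and covariance are constant in $t$ for $t\le 0$, so the process is in the stationary regime throughout the negative half-line.

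For the Lyapunov equation~\eqref{eq:statcov_MMContrGram}, I would give two possible routes and pick the cleaner one. The direct route is to differentiate $\bGamma(t) := \E[\bx(t)\bx(t)^\top]$ using Itô's formula on $\bx\bx^\top$ for $t<0$, which gives $\dot{\bGamma}(t) = \bA\bGamma(t) + \bGamma(t)\bA^\top + \bB\bB^\top$; since $\bGamma(t)\equiv\bGamma$ is constant by the previous step, the left-hand side vanishes and the Lyapunov equation drops out. Alternatively one can compute $\frac{\difd}{\difd\tau}\bigl(\expe^{\bA\tau}\bB\bB^\top\expe^{\bA^\top\tau}\bigr) = \bA\,(\cdot) + (\cdot)\,\bA^\top$ inside the integral defining $\bGamma$ and integrate from $0$ to $\infty$, using stability to discard the boundary term at infinity and leaving $-\bB\bB^\top$ from the boundary at zero.

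The one place that requires a bit of care, and which I view as the main obstacle to a fully rigorous write-up, is justifying the passage $s_0 \to -\infty$: one must confirm that the Itô integral over $(-\infty,t]$ exists in $L^2$ and that the contribution from the deterministic transient $\expe^{\bA(t-s_0)}\bx(s_0)$ vanishes. Both follow from stability of $\bA$ together with a uniform-in-$s_0$ bound on $\E\|\bx(s_0)\|^2$ (which is itself obtained by running the same computation on a finite window and observing the integral stays bounded as $s_0\to-\infty$), but this is the only non-cosmetic step and deserves an explicit sentence in the proof.
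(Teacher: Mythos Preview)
Your proposal is correct and follows essentially the same route as the paper's proof: variation-of-constants to obtain $\bx(t)=\int_{-\infty}^{t}\expe^{\bA(t-s)}\bB\,\difd\bW(s)$, It\^o isometry for the covariance, and the method-of-moments calculation $\difd\E[\bx\bx^\top]=0$ via It\^o's product rule for the Lyapunov equation. Your write-up is in fact more careful than the paper's (you explicitly justify the $s_0\to-\infty$ limit and offer the alternative boundary-term argument for \eqref{eq:statcov_MMContrGram}); nothing needs to change.
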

We provide the proof in \Cref{app:proof}.
Equating this stationary distribution with the prior distribution~\eqref{eq: prior} on $ \bxin = \bx (0)$ gives the relations:
\begin{align}
    \mprior = 0, \qquad \gprior = \int_0^\infty \expe^{\bA t}\bB\bB^\top\expe^{\bA^\top t}\,\difd t.
    \label{eq: inference prior definitions}
\end{align}
This is consistent with the practice of ``spinning up'' atmospheric models so that their starting state (at least before data from observations are incorporated) reflects a plausible ``background covariance''~\cite{johns1997second,seferian2016inconsistent}. 
Thus, \cref{eq: inference prior definitions} provides a natural way of specifying a prior distribution satisfying \Cref{def: compatible prior} if an input port matrix is specified.  Indeed, \cite{redmann_balanced_2018} define the (infinite-time)  reachability Gramian for a stochastic system such as \cref{eq: white noise dyn sys} as the stationary covariance of the state.

\subsubsection{Compatibility modification}\label{sssec: compatibility modification}
We now suppose that we are given a prior covariance $\gprior$ that does \emph{not} satisfy \Cref{def: compatible prior} and provide a direct computational procedure that modifies the prior so that it is compatible. Suppose $\bGamma_0$ is a given positive definite matrix that we take as an initial approximation for a prior covariance matrix that fails to be prior-compatible, i.e., 
$$
  \mathbf{M}_0=\mathbf{A}\bGamma_0+\bGamma_0\mathbf{A}^\top \not \preceq \mathbf{0},
$$
so that, say, $\mathbf{M}_0$ is indefinite. If we write the spectral factorization of $\mathbf{M}_0$ as 
\begin{align*}
  \mathbf{M}_0=[\mathbf{U}_+\ \mathbf{U}_-]\begin{bmatrix}\Lambda_+ & \\ & \Lambda_- \end{bmatrix} \begin{bmatrix}\mathbf{U}_+^\top \\[1mm] \mathbf{U}_-^\top \end{bmatrix}, 
\end{align*}
  where $\Lambda_+=\Omega_+^2$ contains the strictly positive eigenvalues of $\mathbf{M}_0$ while $\Lambda_-$ contains the complementary nonpositive (i.e., negative and zero) eigenvalues of $\mathbf{M}_0$.  The \emph{nearest} negative semidefinite matrix to $\mathbf{M}_0$ (with respect to either the spectral norm or the Frobenius norm) is $\mathbf{M}_-=\mathbf{U}_-\, \Lambda_-\, \mathbf{U}_-^\top$, and the modification to $\bGamma_0$ that achieves this nearest negative semidefinite residual satisfies:
$$
  \mathbf{A}(\bGamma_0+\Delta)+(\bGamma_0+\Delta)\mathbf{A}^\top= \mathbf{U}_-\, \Lambda_-\, \mathbf{U}_-^\top \preceq \mathbf{0}
$$
or equivalently, 
\begin{align}\label{eq:Deltalyap}
  \mathbf{A}\Delta+\Delta\mathbf{A}^\top + \mathbf{U}_+\, \Lambda_+\, \mathbf{U}_+^\top =0, 
\end{align}
which will have a unique positive definite solution, $\Delta=\mathbf{E}^\top\mathbf{E}$. Since $\mathbf{U}_+ \Lambda_+ \mathbf{U}_+^\top$ has an immediately accessible square root, namely $\mathbf{U}_+ \Omega_+$, the Cholesky factor, $\mathbf{E}$, of $\Delta$ can be computed directly using an approach introduced by Hammarling in \cite{hammarling1982lyapchol}. This in particular allows one to compute a unique compatible prior $\gprior \coloneqq \bGamma_0 + \Delta$. The modification, $\Delta$, to $\bGamma_0$ is \emph{not} generally the smallest modification to $\bGamma_0$ making it prior-compatible. Rather, it is the uniquely determined modification to $\bGamma_0$ producing the nearest negative semidefinite residual. See \Cref{app:ModPrior} for a routine in \textsc{matlab} that implements this compatibility computation.

\subsection{Observability Gramians and the Fisher information matrix}\label{ssec: observability discussion}
Our proposed approach is motivated by connections between the Fisher information matrix and system-theoretic observability Gramians.
In contrast to the connection between the prior covariance and the reachability Gramian discussed in~\Cref{ssec:prior compatibility}, we describe in \Cref{sssec: Fisher incompat} a fundamental obstacle to establishing a precise relation between the Fisher information matrix and an observability Gramian under a general given observation protocol. 
However, the Fisher information matrix $\bH$ can approach the noisy observability Gramian $\Qm$, up to a scaling factor of sampling frequency, in certain limits. Note that this scaling factor of inverse time arises because $\Qm$ is a time integral and thus has an extra dimension of time when compared to $\bH$, which is a sum. \Cref{sec:contlim} considers the limit of continuous measurements and \Cref{sec:randgram} considers infinite discrete-time measurements taken at times distributed according to a Poisson process.
These approaches are related to asymptotics of covariance approximations by Fisher information matrices \cite{abt_fisher_1998}, and are also related to models of infill and domain asymptotics in statistics \cite{chen_infill_2000,zhang_towards_2005,zhu_spatial_2006}.  

While infinite-horizon definitions~\eqref{eq: standard infinite gramians} of system-theoretic Gramians are most commonly used, and indeed are required for the results of \Cref{lemma: continuous time BT,lem: BT Goodness} to hold, other Gramian definitions exist~\cite{antoulas2005approximation,gugercin2004survey,BennerB2017}. For example, one can define the Gramians over a finite time interval $[t_1,t_2]$, leading to time-limited Gramians. Furthermore, in practical balanced truncation settings, the infinite time-horizon Gramians are sometimes approximated via numerical quadrature~ \cite{moore1981principal,WP02,Ro05,BF19,breiten2016structure} or via approximate solution methods to the Lyapunov equations that yield low-rank Gramian approximations~\cite{simoncini2016computational,kurschner2016efficient,BenS13}. 
In this context, the scaled Fisher information matrix can be viewed as an approximation to the infinite noisy Gramian of \Cref{def: noisy obs Gramian}, see, e.g., \cite{powel_empirical_2015,roy_relation_2009}. When $\bH$ is a sufficiently close approximation to $\Qm$ so that their dominant generalized eigenpairs with $\gprior^{-1}$ coincide, then \Cref{prop: QmBT is hatH} applies.

\subsubsection{Incompatibility of the Fisher information matrix}\label{sssec: Fisher incompat}

From formal duality, one could formulate a \emph{likelihood-compatibility} condition, in analogy to \Cref{def: compatible prior} for the prior covariance, by saying 
a Fisher information matrix $\bH$ is \emph{likelihood-compatible} with state dynamics induced by $\bA$, if $\bH\bA + \bA^\top\bH$ is negative semidefinite.  That would imply the existence of some $\bC\in\R^{\nout \times \ns}$ such that $\bH$ solves a dual Lyapunov equation~\eqref{LyapEqns} $\bA^\top\bH + \bH\bA = - \bC^\top\bC$, and is thus the infinite-horizon observability Gramian of an LTI dynamical system~\eqref{eq: LTI dyn sys} with output port matrix $\bC$. Replacing the noisy observability Gramian $\Qm$ with a likelihood-compatible $\bH$ in the BT-Q approach of \Cref{ssec: method} (which requires a compatible prior) would then yield a reduced model that inherits the theoretical guarantees of \Cref{lem: BT Goodness} and \Cref{prop: QmBT is hatH}.

Unfortunately, Fisher information matrices for observation models that occur in plausible inference settings will be extremely unlikely to be likelihood-compatible. In particular, the Fisher information as defined in \eqref{eq: fisher info sum}, $\mathbf{H}=\sum_{i=1}^n e^{\mathbf{A}^Tt_i}\mathbf{C}^T\gobs^{-1}\mathbf{C}e^{\mathbf{A}t_i}$,  is often rank-deficient, and in that case in order for $\mathbf{H}$ also to be likelihood compatible with respect to the dynamics determined by $\bA$, it is necessary for $\mathsf{ker}(\mathbf{H})$ to be an invariant subspace for $\mathbf{A}$, which is extraordinarily unlikely to occur with measurement models specified in practice.
One may include adjustments to the Fisher information intended to enforce likelihood-compatibility (analogously to the prior compatibility modification of \Cref{sssec: compatibility modification}) however such adjustments appear difficult to justify since they would represent significant \emph{ad hoc} changes to the observation protocol that at best would be unrealistic. 

\subsubsection{Continuum Limit of Regularly Spaced Observations}
\label{sec:contlim}
We proceed next to show that despite the frustrated equivalence between the Fisher information matrix and a continuous-time observability Gramian under general observation protocols, we can establish a formal proportionality between these quantities in an idealized continuum limit of finely spaced perpetual observations.
Suppose the measurement times $\{t_i\}_{i=1}^\nmeas$, with $\nmeas\geq 2$, in the definition of the Fisher information matrix~\eqref{eq: fisher info sum} are equispaced  in the interval
$[\tstart, \tend]$, i.e.,
\begin{align}
  t_i &\coloneqq \tstart + (i-1) h, & h &\coloneqq \frac{\tend - \tstart}{\nmeas-1}, \label{eq:regsamp}
\end{align}
for $i = 1, \ldots, \nmeas$. 
If we scale the Fisher information matrix by the sampling time interval $ h$,
then the large-$\nmeas$ energy limits to an integral (we make the dependence of $\bH$ on $\nmeas$ explicit here):
\begin{align}\label{eq: fisher info infinity}
    \hat{\bH}_\infty \coloneqq \lim_{n\to\infty}h\,\bH_\nmeas = \lim_{n\to\infty}h\sum_{i=1}^n \expe^{\bA^\top t_i}\bC^\top \gmeas^{-1}\bC\expe^{\bA t_i}= \int_{\tstart}^{\tend}\expe^{\bA^\top t}\bC^\top \gmeas^{-1}\bC\expe^{\bA t}\,\difd t.
\end{align}
In this finite observation interval setting, $\hat{\bH}_\infty$ satisfies the modified Lyapunov equation,
\begin{equation}\label{eq:modified-lyapunov}
  \bA^\top\hat{\bH}_\infty + \hat{\bH}_\infty\bA =e^{\bA^\top \tend} \bC^\top \gmeas^{-1} \bC e^{\bA \tend}-e^{\bA^\top \tstart} \bC^\top \gmeas^{-1} \bC e^{\bA \tstart}.
\end{equation}

We recover the noisy infinite observability Gramian of \Cref{def: noisy obs Gramian} if we take the observation start time $ \tstart \downarrow 0$ and observation end time $ \tend \uparrow \infty$ together with the finer spacing $\rev{h} $ of the observations, provided the normalization of the energy scales appropriately with the sampling interval.
\begin{proposition}\label{prop:continuum-limit}
  Let a stable matrix $\bA$ and an output matrix $\bC$ be given. Define a collection of observation times $ \{t_i\}_{i=1}^{\nmeas}$ by $ t_i = (i-1)h(\nmeas) $ where $h(\nmeas) \equiv \frac{T(\nmeas)}{\nmeas-1}$ with the observation time interval $ T (\nmeas) $ increasing sublinearly with the number of observations so that $ T (\nmeas)\rightarrow \infty  $ while $ h(\nmeas)= \frac{T(\nmeas)}{\nmeas-1} \rightarrow 0$ as $ \nmeas \rightarrow \infty $.
  Then,
  \begin{align}\label{eq:continuum-limit}
  \lim_{\nmeas \rightarrow \infty}
  h(n) \sum_{i=1}^{\nmeas}  e^{\bA^\top t_i} \bC^\top \gmeas^{-1} \bC e^{\bA t_i}  = \int_0^\infty e^{\bA^\top t} \bC^\top \gmeas^{-1} \bC e^{\bA t} \difd t.
  \end{align}
\end{proposition}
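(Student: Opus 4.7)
The plan is to recognize the left-hand side of \eqref{eq:continuum-limit} as a left Riemann sum for an exponentially decaying matrix-valued integrand on a growing interval, and to use stability of $\bA$ to trade the tail of the sum against the tail of the integral. Set $f(t) \coloneqq e^{\bA^\top t}\bC^\top\gmeas^{-1}\bC e^{\bA t}$ and abbreviate $h \coloneqq h(n)$. Because $\bA$ is stable, there exist constants $M \ge 1$ and $\alpha > 0$ with $\|e^{\bA t}\| \le M e^{-\alpha t}$ for all $t \ge 0$, so that $\|f(t)\| \le K e^{-2\alpha t}$ with $K = M^2\|\bC^\top\gmeas^{-1}\bC\|$. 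This simultaneously ensures the integral on the right of \eqref{eq:continuum-limit} exists and gives the tail control we will need on the left. Since $t_i = (i-1)h$, the scaled sum $h\sum_{i=1}^n f(t_i) = h\sum_{i=0}^{n-1} f(i h)$ is a uniform left Riemann sum for $\int_0^{nh} f(t)\,dt$, whose upper endpoint $nh = nT(n)/(n-1)$ tends to infinity.

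The first step is a standard Riemann-sum argument on a fixed compact interval. For any $T_0 > 0$, $f$ is continuous (indeed real analytic) on $[0,T_0]$, hence uniformly continuous there. For each sufficiently large $n$ let $i^\star(n)$ be the largest index with $t_{i^\star(n)} \le T_0$; the partial sum $h\sum_{i=1}^{i^\star(n)} f(t_i)$ then converges to $\int_0^{T_0} f(t)\,dt$ as $n \to \infty$ by the elementary convergence theorem for left Riemann sums of continuous functions, together with $h(n)\to 0$.

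The second step controls the tails uniformly in $n$. Given $\varepsilon > 0$, the exponential bound allows us to choose $T_0$ so large that $\int_{T_0}^\infty \|f(t)\|\,dt \le K e^{-2\alpha T_0}/(2\alpha) < \varepsilon$. For the discrete tail, the geometric structure of $\{e^{-2\alpha t_i}\}$ gives
\begin{align*}
  \Bigl\| h\sum_{i > i^\star(n)} f(t_i) \Bigr\|
  \;\le\; K\, h \sum_{i > i^\star(n)} e^{-2\alpha t_i}
  \;\le\; \frac{K\, h \, e^{-2\alpha T_0}}{1 - e^{-2\alpha h}}.
\end{align*}
Since $h/(1 - e^{-2\alpha h}) \to 1/(2\alpha)$ as $h \to 0$, the right-hand side is bounded by $2\varepsilon$ for all sufficiently large $n$ (after enlarging $T_0$ if necessary). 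Adding the three estimates — Riemann convergence on $[0,T_0]$, small integral tail beyond $T_0$, and small sum tail beyond $T_0$ — yields \eqref{eq:continuum-limit}.

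The main technical care required is in coordinating the two simultaneous limits $h(n)\to 0$ and $nh(n)\to\infty$: Riemann-sum convergence alone applies only on bounded intervals, while the exponential-tail bound alone cannot address the discretization error. Stability of $\bA$ is the ingredient that reconciles them, since it guarantees both integrability of $f$ on $[0,\infty)$ and uniform smallness of the discrete tail via a geometric series. The hypothesis that $T(n)$ grows sublinearly in $n$ is in fact automatic from $h(n)\to 0$ and $T(n)\to\infty$, and is not used further in the argument.
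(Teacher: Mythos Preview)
Your proof is correct and follows essentially the same strategy as the paper's own argument: truncate both the sum and the integral at a finite time $T_0$, invoke Riemann-sum convergence on the compact interval, and use the exponential decay coming from stability of $\bA$ to control both tails uniformly before sending $T_0\to\infty$. Your version is simply more explicit---in particular the geometric-series bound on the discrete tail and the observation that the ``sublinear growth'' hypothesis is already encoded in $h(n)\to0$ are details the paper leaves implicit.
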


{\begin{proof}
The integrand on the right hand side and the summand on the left hand side are continuous matrix functions whose norm is bounded and decays exponentially.  Truncating both the sum and the integral to times $ 0 \leq t, t_i \leq \tau$ leads to a Riemann sum convergence on this truncated interval.  The truncated component $ t_i, t >\tau$ can be bounded in norm by a function of the form $ k_1 \expe^{-k_2 \tau} $ for positive constants $k_1$ and $k_2$ depending on the norm of $ \bC$, the norm bound on $ \gmeas^{-1}$, and the stability properties of the matrix $ \bA$.  By taking $ \tau \uparrow \infty$, we achieve the stated result. \hfill$\Box$
\end{proof}}

\subsubsection{Observation Models at Random Times}
\label{sec:randgram}
Observability Gramians with an integral form can also be associated to observation models that take place at discrete, but randomly distributed, moments of time. For example, the observation times in a field experiment might not be perfectly predictable due to considerations of weather, availability of human resources, and/or availability of equipment.  We do not propose randomly distributed observation times as an important practical innovation, but rather to provide another means of interpreting the relation between the observability Gramian and the Fisher information.

As in the previous section, we specify a time $ \tstart$ when observations begin and a time $ \tend$ when observations end.  If we have $ \nmeas $ observations to be taken at independent and uniformly distributed times over this interval $ t_i \sim U(\tstart,\tend)$, we would have a random Fisher information due to the randomness in the $t_i$.  However, if we average over the realizations of the uniformly distributed observation times, we would obtain the mean Fisher information:
\begin{align*}
    \bar\bH &\equiv (\tend-\tstart)^{-\nmeas} \int_{\tstart}^{\tend} \int_{\tstart}^{\tend} \cdots \int_{\tstart}^{\tend} 
    \sum_{i=1}^{\nmeas}  e^{\bA^\top t_i^{\prime}} \bC^\top {\gmeas^{-1}} \bC e^{\bA t_i^{\prime}} \, \difd t_1^{\prime} \, \difd t_2^{\prime} \ldots \difd t_{\nmeas}^{\prime} \\
    &= \frac{\nmeas}{\tend-\tstart} \int_{\tstart}^{\tend} 
    e^{\bA^\top t} \bC^\top \gmeas^{-1} \bC e^{\bA t} \, \difd t.
\end{align*}
The same mean Fisher information would result if we divided the observation period into $ \nmeas$ equal subintervals of length $ h= (\tend-\tstart)/\nmeas $, and took one observation at a time $t_i$ that is uniformly distributed within each size-$h$ subinterval:
\begin{align}\nonumber
  \bar\bH &= \sum_{i=1}^n \frac{1}{h}\int_{\tstart +(i-1)h/\nmeas}^{\tstart+ih/\nmeas} e^{\bA^\top t_i^{\prime}} \bC^\top \gmeas^{-1} \bC e^{\bA t_i^{\prime}} \, \difd t_i^{\prime} \\
    &= \frac{1}{h} \int_{\tstart}^{\tend} 
    e^{\bA^\top t} \bC^\top \gmeas^{-1}\bC e^{\bA t} \, \difd t.\label{eq: mean Fisher info}
\end{align}
Notice we have $ \nmeas $ subintervals in this formulation, as compared to the $ \nmeas-1 $ subintervals in \Cref{sec:contlim} because we have here $ \nmeas$ observations distributed in the interior, rather than the endpoints, of the subintervals.

To obtain the standard infinite horizon observability Gramian~\eqref{eq: standard infinite gramians} under a similar paradigm, we consider a countably infinite sequence of observations at randomly distributed times ($\nmeas \rightarrow \infty$). Under the appropriate limiting models of random measurements we can recover, in expectation, an infinite-horizon Gramian multiplied by the sampling frequency.
\begin{proposition}\label{prop:mean-observability-limit}
  Let a stable matrix $\bA$ and an output matrix $\bC$ be given. Let $\{t_i\}_{i \geq 1}$ be a random sequence of times defined by either of the following models:
  \begin{enumerate}
    \item[(a)] For some $h > 0$, $t_i \sim U((i-1)h, i h)$. 
    \item[(b)] For some $h > 0$, {the $\{t_i\}_{i\geq 1 }$ are the realization of a Poisson point process on $ [0,\infty) $ with intensity $h^{-1}$.}
  \end{enumerate}
  Then, as $n \uparrow \infty$, the expected Fisher information~\eqref{eq: mean Fisher info} limits {to an  infinite-time observability Gramian scaled by the mean sampling frequency:}
  \begin{align}\label{eq:mean-observability-limit}
    \lim_{\nmeas \rightarrow \infty} \E\left[ \sum_{i=1}^{\nmeas}  e^{\bA^\top t_i} \bC^\top \gmeas^{-1} \bC e^{\bA t_i} \right] = h^{-1} \int_0^\infty e^{\bA^\top t} \bC^\top \gmeas^{-1} \bC e^{\bA t} \difd t.
  \end{align}
\end{proposition}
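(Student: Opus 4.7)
The two cases reduce to essentially the same computation via linearity of expectation (case (a)) or Campbell's theorem for Poisson point processes (case (b)), with stability of $\bA$ providing the exponential decay needed to pass to the infinite-horizon integral. The plan is to first rewrite the expected sum as a deterministic integral, then exploit exponential decay of $e^{\bA^\top t}\bC^\top\gmeas^{-1}\bC e^{\bA t}$ to show that the partial sums exhaust the integral over $[0,\infty)$ in the limit.

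For case (a), independence of the $t_i$ and uniformity on the disjoint subintervals $((i-1)h, ih)$ give, by linearity,
\[
\E\!\left[\sum_{i=1}^n e^{\bA^\top t_i}\bC^\top\gmeas^{-1}\bC e^{\bA t_i}\right] = \sum_{i=1}^n \frac{1}{h}\int_{(i-1)h}^{ih} e^{\bA^\top t}\bC^\top\gmeas^{-1}\bC e^{\bA t}\,\difd t = \frac{1}{h}\int_0^{nh} e^{\bA^\top t}\bC^\top\gmeas^{-1}\bC e^{\bA t}\,\difd t.
\]
Stability of $\bA$ implies the integrand is bounded in norm by $k_1 e^{-k_2 t}$ for some positive constants $k_1, k_2$, so the tail $\int_{nh}^\infty$ vanishes as $n\to\infty$, and the right-hand side converges to $h^{-1}\int_0^\infty e^{\bA^\top t}\bC^\top\gmeas^{-1}\bC e^{\bA t}\,\difd t$, as claimed.

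For case (b), I order the points of the Poisson process as $t_1 < t_2 < \cdots$. Since each summand $e^{\bA^\top t_i}\bC^\top\gmeas^{-1}\bC e^{\bA t_i}$ is symmetric positive semidefinite, the partial sums form a monotonically increasing sequence in the Loewner order, so by monotone convergence (applied entrywise, or to the trace),
\[
\lim_{n\to\infty}\E\!\left[\sum_{i=1}^n e^{\bA^\top t_i}\bC^\top\gmeas^{-1}\bC e^{\bA t_i}\right] = \E\!\left[\sum_{i\ge 1} e^{\bA^\top t_i}\bC^\top\gmeas^{-1}\bC e^{\bA t_i}\right].
\]
Campbell's formula for a Poisson point process of intensity $h^{-1}$ then yields $\E[\sum_{i\ge 1}f(t_i)] = h^{-1}\int_0^\infty f(t)\,\difd t$ whenever the right side converges absolutely, which it does here by the same exponential decay argument as in case (a) with $f(t)=e^{\bA^\top t}\bC^\top\gmeas^{-1}\bC e^{\bA t}$.

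The main obstacle is really just keeping the bookkeeping clean in case (b): one needs to justify both (i) that $t_n \to \infty$ almost surely so the partial sums really do sweep out the full process, and (ii) that the exchange of expectation and infinite sum is valid, which is where monotonicity (via positive semidefiniteness of each summand) and absolute integrability via stability of $\bA$ both enter. Once those are in hand, Campbell's formula delivers the stated identity directly. Case (a) is essentially a Riemann-sum telescoping and poses no real difficulty beyond observing that $[0, nh)$ exhausts $[0,\infty)$.
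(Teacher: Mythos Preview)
Your proposal is correct and follows essentially the same approach as the paper: case (a) via linearity of expectation and exponential decay of the integrand, case (b) via Campbell's theorem together with monotone convergence to interchange the limit and expectation. The only cosmetic difference is that the paper carries out the monotone-convergence step by fixing an arbitrary vector $\bxi$ and working with the scalar quadratic form $f(t)=\bxi^\top e^{\bA^\top t}\bC^\top\gmeas^{-1}\bC e^{\bA t}\bxi$, which is nonnegative and makes the application of MCT immediate; your parenthetical ``applied entrywise'' is slightly loose for off-diagonal entries, but your Loewner-order observation is exactly the quadratic-form monotonicity the paper uses, so the argument goes through.
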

\begin{proof}
  {Define
  \begin{align*}
    f(t) \coloneqq \bxi^\top \cdot e^{\bA^\top t} \bC^\top \gmeas^{-1} \bC e^{\bA t} \cdot \bxi.
  \end{align*}
  where $ \bxi \in \mathbb{R}^{d}$.}
  Under model (a), a similar computation as in \cref{eq: mean Fisher info} yields,
  \begin{align}
    \E \left[ \sum_{i=1}^{\nmeas} f(t_i) \right] = h^{-1} \int_0^{h n} f(t) \difd t.
  \end{align}
  Since $f$ is {non-negative}, continuous for all $t \geq 0$, and decays exponentially for large $t$, we also have,
  \begin{align}
    \lim_{n \rightarrow \infty} \int_0^{h n} f(t) \difd t = \int_0^\infty f(t) \difd t.
  \end{align}
  As $ \bxi\in \mathbb{R}^d$ was arbitrary, we obtain the result~\eqref{eq:mean-observability-limit}.
{
  Under model (b), note that $f$ is integrable for $t \geq 0$. Then Campbell's theorem for Poisson processes asserts that 
  \begin{align*}
    \mathbb{E} [\sum_{i=1}^{\infty} f(t_i)] = h^{-1} \int_0^{\infty} f(t) \, \difd t.
  \end{align*}
This, along with, 
    \begin{align}
   \E\left[ \lim_{\nmeas \rightarrow \infty} \sum_{i=1}^{\nmeas} f(t_i) \right] = \lim_{\nmeas \rightarrow \infty} \E \left[ \sum_{i=1}^{\nmeas} f(t_i) \right],
  \end{align}
  which follows from the monotone convergence theorem, and again the arbitrary choice of $ \bxi \in \mathbb{R}^d$, proves the desired result.}\hfill$\Box$
\end{proof}

\noindent
\emph{Remark.} The standard observability Gramian~\eqref{eq: standard infinite gramians} would correspond to an infinite sequence of observations at unit frequency, either according to a standard Poisson point process or a randomized sampling within regularly spaced intervals. The above result also implies that the large-$n$ expectation of the Fisher information matrix \eqref{eq: fisher info sum} equals the (scaled) infinite-time observability Gramian, assuming the observations $t_i$ are distributed according to one of the models in \Cref{prop:mean-observability-limit}.

\section{Numerical experiments}\label{sec: num exp}
We now demonstrate the balanced truncation approaches on two test problems based on LTI systems that are benchmarks within the system-theoretic model reduction community\footnote{LTI system matrices and documentation for both examples can be found online at {\tt http://slicot.org/20-site/126-benchmark-examples-for-model-reduction}.}: a heat equation example and a structural model for the ISS1R module.  For each LTI system, the inference problem is to infer the initial system state from output measurements made after $t=0$ corrupted by Gaussian noise. For each system, we will use the system dynamics to determine a compatible prior covariance matrix as defined in \Cref{def: compatible prior}. Then, we will compare the posterior mean and covariance approximation performance of three approaches:
\begin{enumerate}
    \item from~\cite{spantini_optimal_2015}, the optimal low-rank update (\textbf{OLRU}) approximations of the posterior covariance~\eqref{truncK} and mean~\eqref{eq: low rank update mean}, and the optimal low-rank (\textbf{OLR}) posterior mean approximation,
    \item the \textbf{BT-Q} posterior mean and covariance approximation of \Cref{ssec: method} based on the pencil $(\Qm, \gprior^{-1})$, and
    \item the \textbf{BT-H} posterior mean and covariance approximation of \Cref{ssec: method} based on the pencil $(\bH,\gprior^{-1})$.
\end{enumerate}
For both LTI systems, we consider both a `good' observation model where the final observation time, $T$, and number of measurements, $n$, are both large, as well as a `bad' observation model where $T$ and $n$ are both small.
\Cref{ssec: heat results,ssec: ISS results} describe the heat equation and ISS1R problems, respectively, and present numerical results for the aforementioned posterior approximations. \Cref{ssec: numerical discussion} discusses the numerical results from both examples. 

\subsection{Heat equation}\label{ssec: heat results}
We consider the heat equation in a one-dimensional rod with homogeneous Dirichlet boundary conditions. The state and output dimensions are $\ns=200$ and $\nout=1$, respectively, and the single output is the temperature measured at 2/3 of the rod length. We define the prior covariance $\gprior$ to be the infinite reachability Gramian~\eqref{eq: standard infinite gramians} that would result from taking $\bB=\bI$. The true initial condition is drawn from $\cN(0,\gprior)$ and used to generate noisy output measurements at $t_i=ih$, for $i = 1,2,\ldots,n$. The noise values at each measurement are independently and identically distributed: $\gmeasi=\gmeas= \sigma_{\rm obs}^2$, where the noise standard deviation is chosen to be roughly 10\% of maximum magnitude of the noiseless output, $\sigma_{\rm obs}=0.008$.

\begin{figure}[h]
    \centering
    \includegraphics[width=0.32\textwidth]{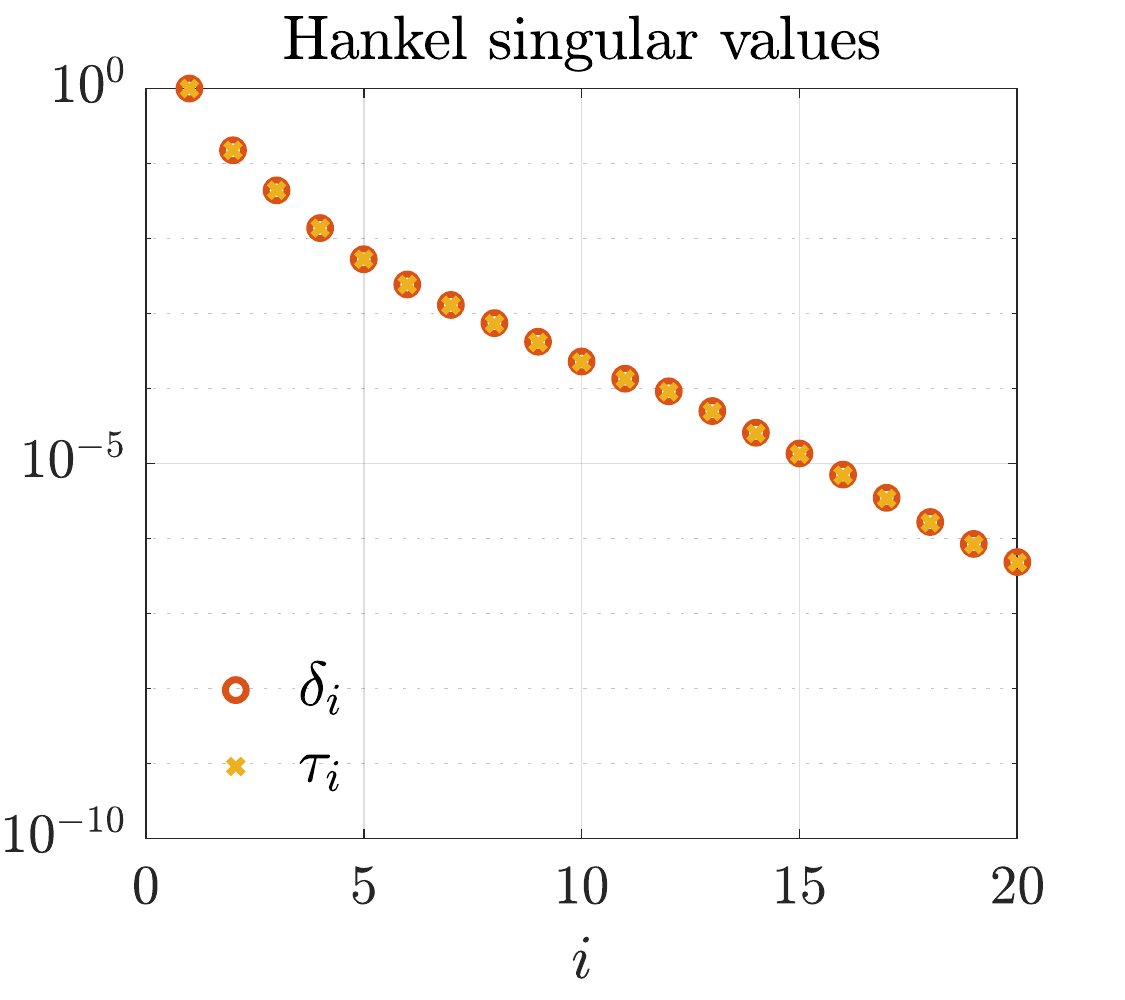}
    \hfill
    \includegraphics[width=0.32\textwidth]{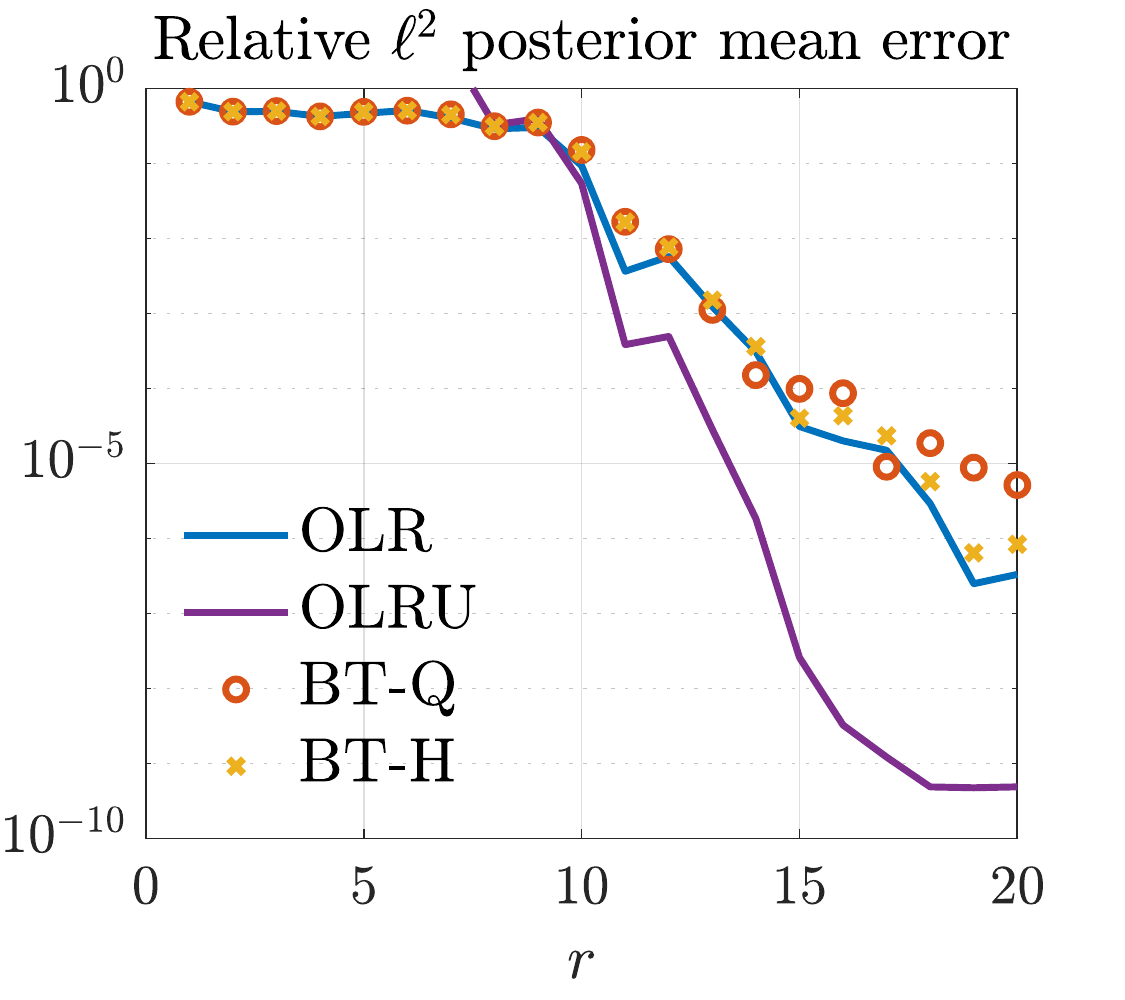}
    \hfill
    \includegraphics[width=0.32\textwidth]{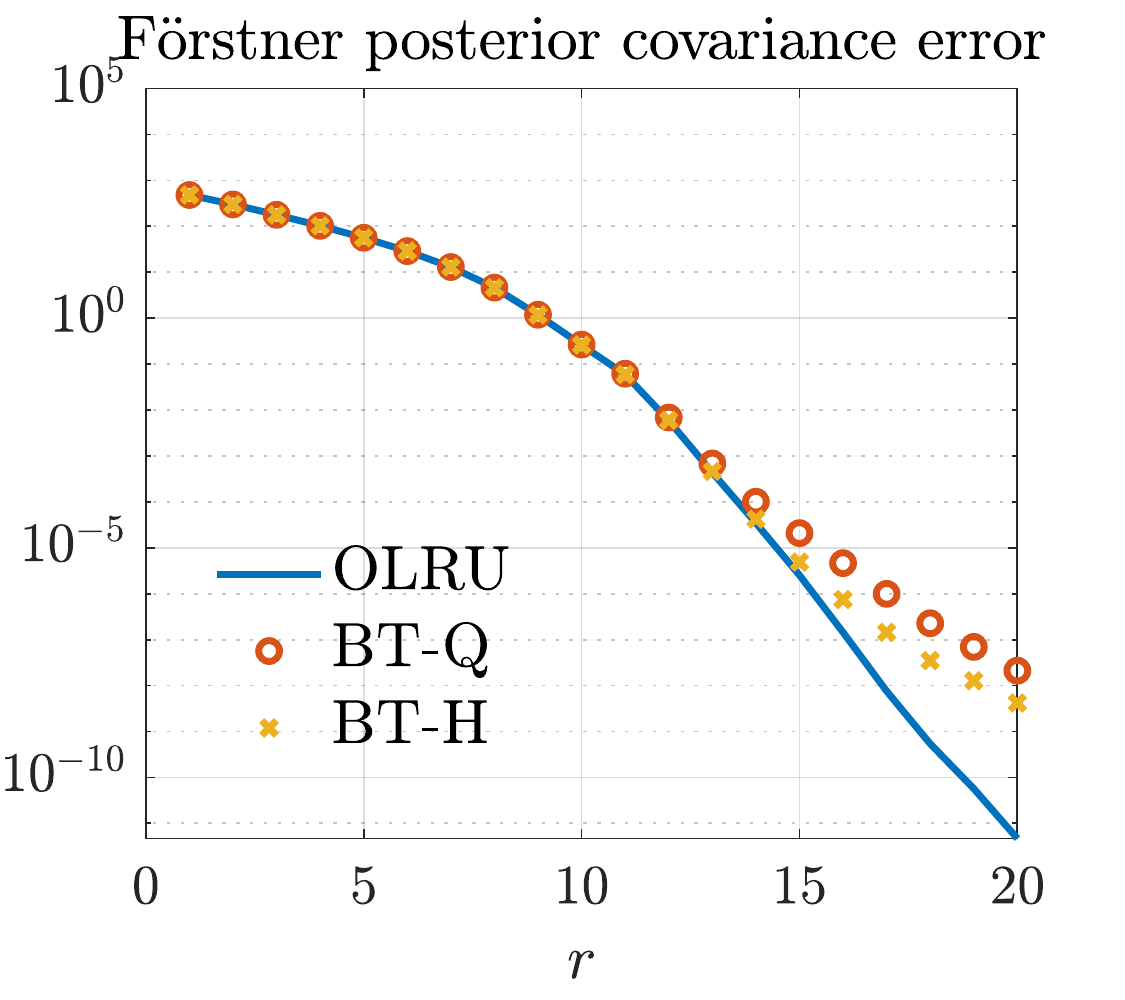}
    \caption{Results for the heat equation model with measurements spaced $h=10^{-4}$ apart until $T=50$, for which $\Qm$ and $h\bH$ have a relative error of 0.1\%. In the left panel we plot $ \tau_i $ and $ \delta_i $, the square roots of the generalized eigenvalues of the pencils $(\bH,\gprior^{-1})$ and $(\Qm,\gprior^{-1})$, respectively. These are normalized relative to $\delta_1, \tau_1$, respectively, and coincide with the Hankel singular values.  In the middle and right panels, we compare respectively the relative errors of the posterior means of the various low-rank approximation approaches in the $ \ell^2$ norm, and the errors of the posterior covariances in the F\"{o}rstner distance.
    }
    \label{fig:heat good}
\end{figure}

\begin{figure}[h]
    \centering
    \includegraphics[width=0.32\textwidth]{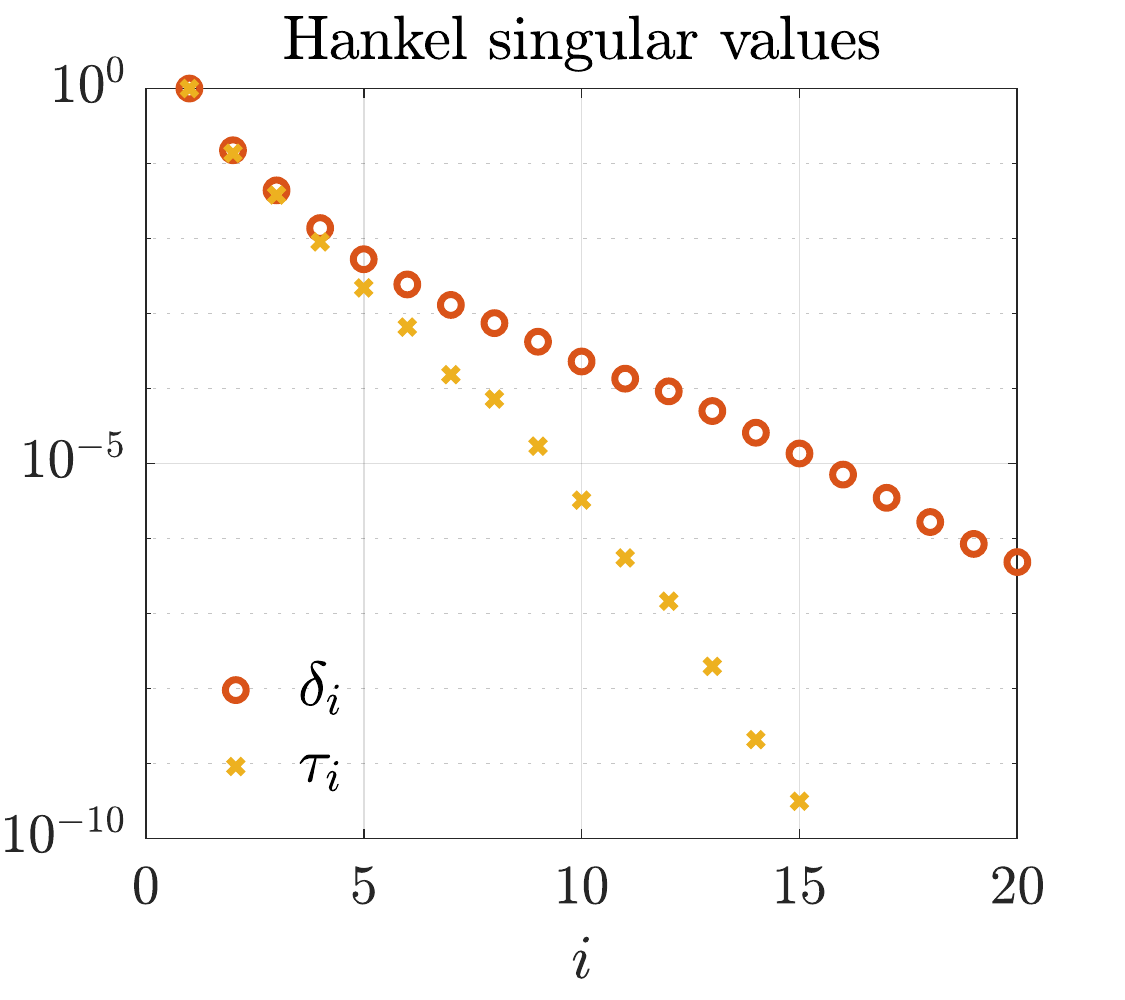}
    \hfill
    \includegraphics[width=0.32\textwidth]{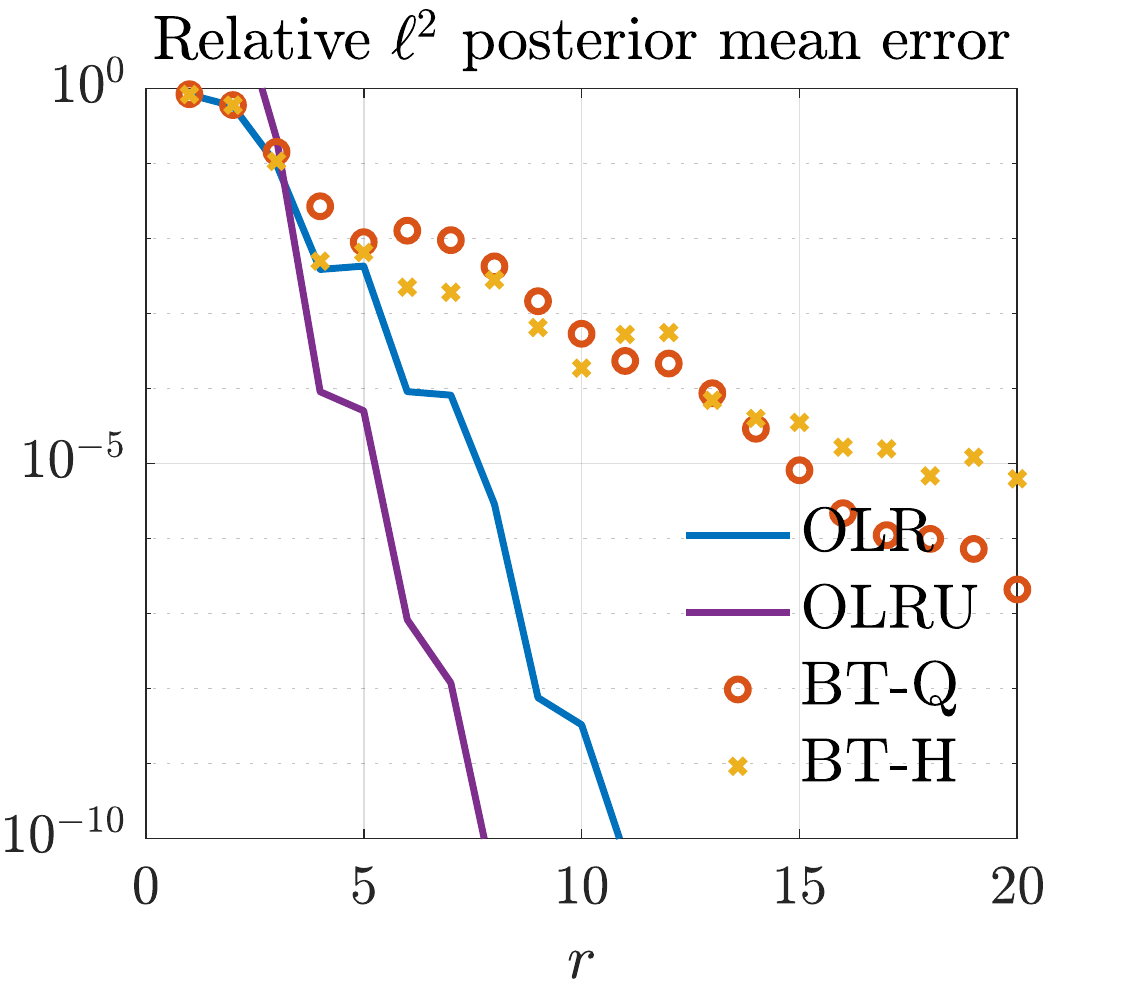}
    \hfill
    \includegraphics[width=0.32\textwidth]{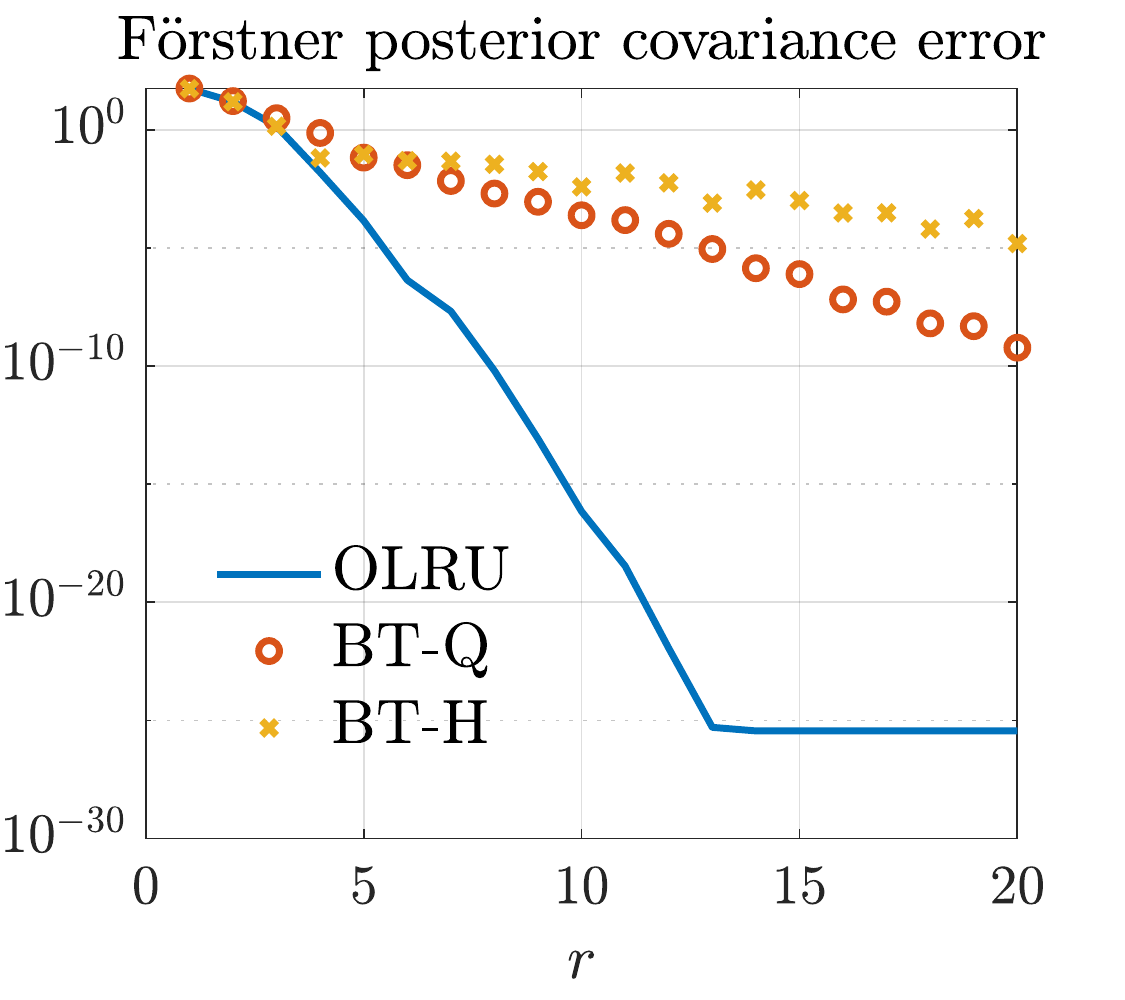}
    \caption{Results for the heat equation model with measurements spaced $h=10^{-1}$ apart until $T=10$, for which $\Qm$ and $h\bH$ have a relative error of 15\%. The quantities plotted are as defined in Figure~\ref{fig:heat good}.
    }
    \label{fig:heat bad}
\end{figure}

We consider in \Cref{fig:heat good} an observation model with $n=5 \times 10^5$ measurements spaced $h = 10^{-4}$ apart, for a final observation time $T = 50$. The spectral abscissa of $\bA$ for this system is $\approx -0.1$, so this $T$ value is about five times the slowest characteristic timescale. We therefore expect that the finely spaced observations in this model can capture most of the system dynamics and approximate the infinite integral. Indeed, the relative Frobenius norm difference between $h\bH$ and $\bQ$ is 0.1\% in this case. In contrast, \Cref{fig:heat bad} considers a limited observation model with $n = 100$ measurements spaced $h = 0.1$ apart, for a final observation time of $T = 10$. In this case the relative Frobenius norm difference between $h\bH$ and $\bQ$ is 15\%.

\subsection{ISS1R module}\label{ssec: ISS results}
This benchmark LTI system is a structural model for the flex modes of the Zvezna service module of the International Space Station~\cite{antoulas2005approximation}. The state and output dimensions are $\ns=270$ and $\nout=3$, respectively; the outputs correspond to roll, pitch, and yaw gyroscope measurements. An input matrix $\bB$ with $\nin=3$ is provided as part of the model that corresponds to roll, pitch, and yaw jet controls. This $\bB$ is used to generate a compatible prior $\gprior$. The true initial condition is drawn from $\cN(0,\gprior)$ and used to generate noisy output measurements. The noise covariance $\gmeas=\diag\{0.0025^2, 0.0005^2, 0.0005^2\}$ is set so that the noise standard deviation of each output is roughly 10\% of the maximum output signal as before.

\Cref{fig:iss good} considers an observation model with $n = 3000$ measurements spaced $h= 0.1$ apart for a final observation time $T = 300$. In this case, the relative difference between $h\bH$ and $\Qm$ is 1\%. \Cref{fig:iss bad} considers $n = 10$ measurements spaced $h = 1$ apart for a final observation time $T = 10$. In this case, the relative difference between $h\bH$ and $\Qm$ is 53\%.

\begin{figure}[h]
    \centering
    \includegraphics[width=0.32\textwidth]{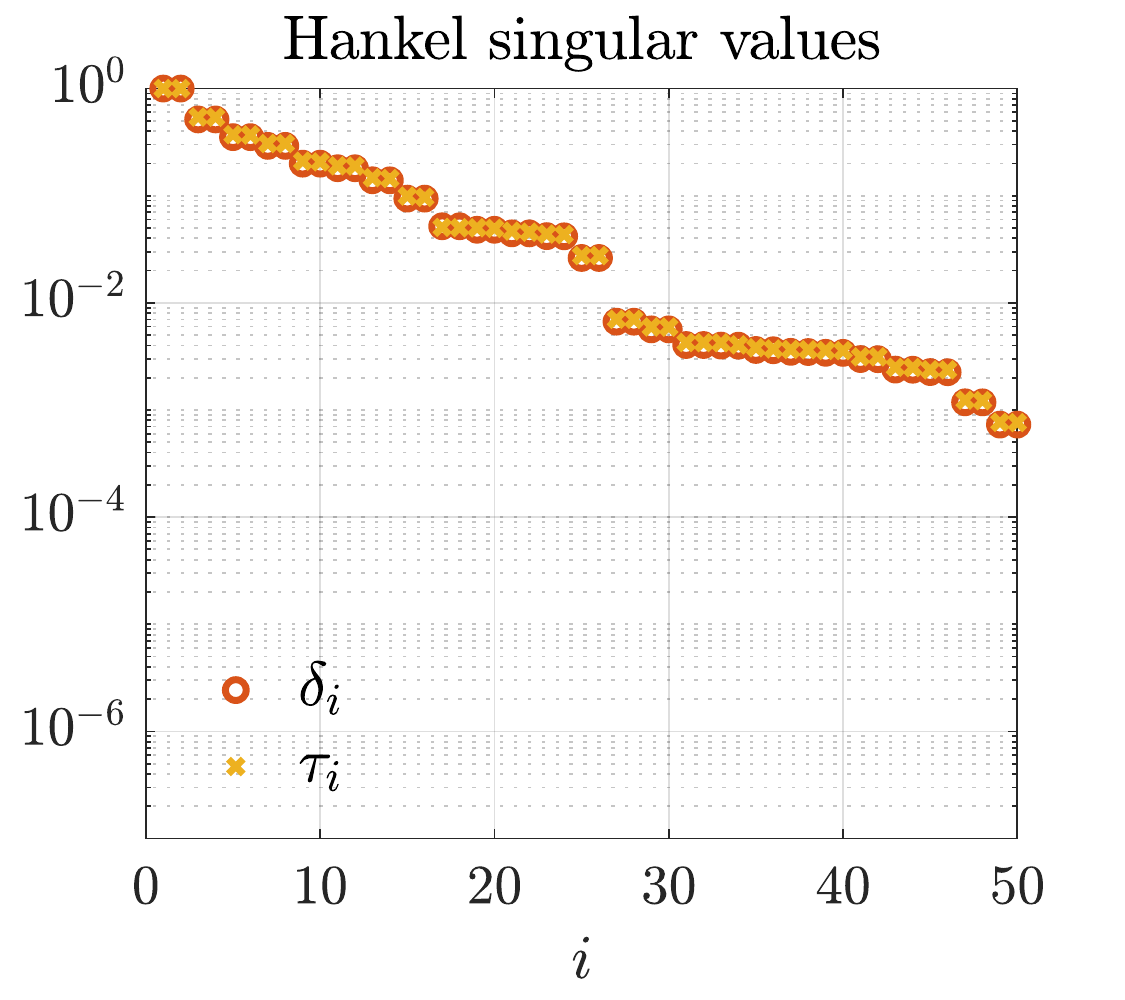}
    \hfill
    \includegraphics[width=0.32\textwidth]{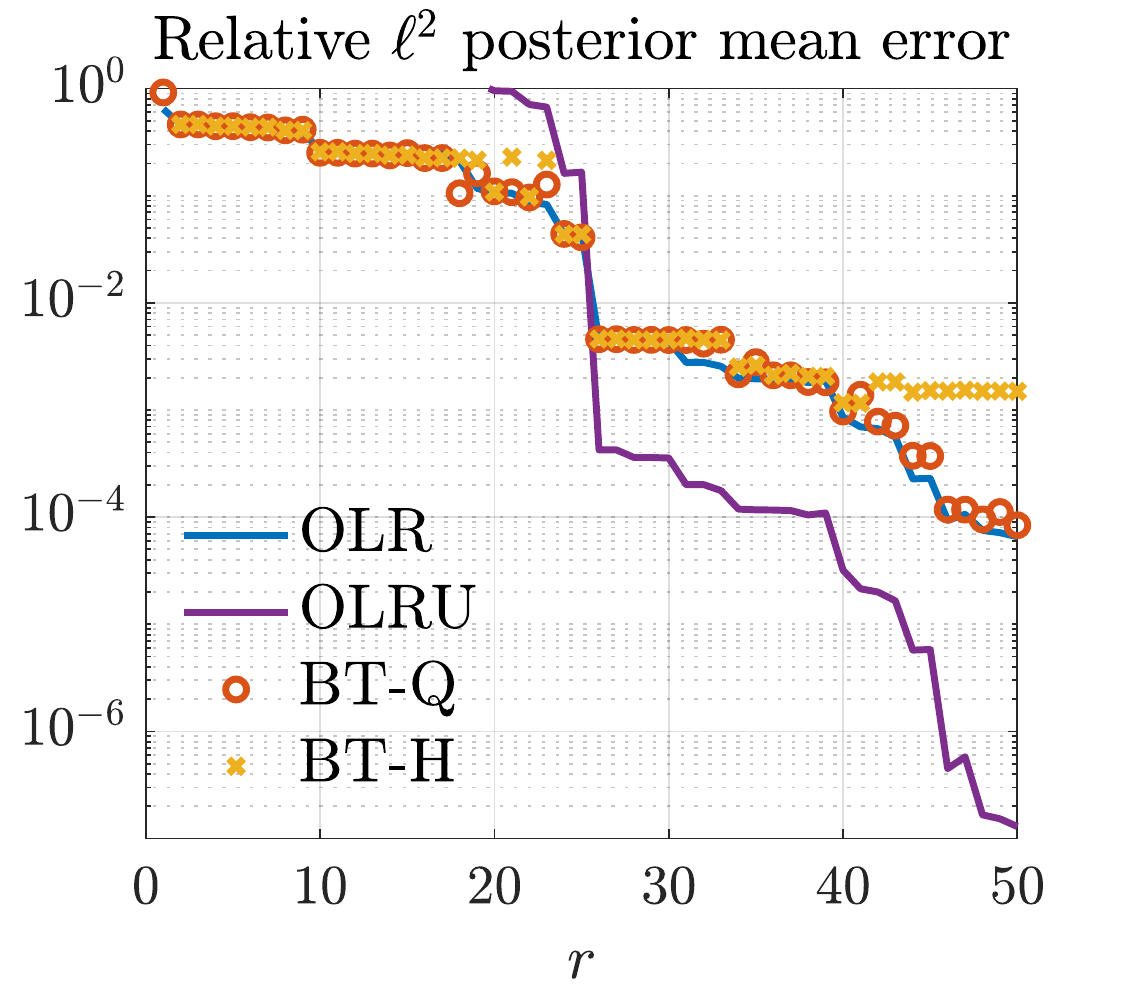}
    \hfill
    \includegraphics[width=0.32\textwidth]{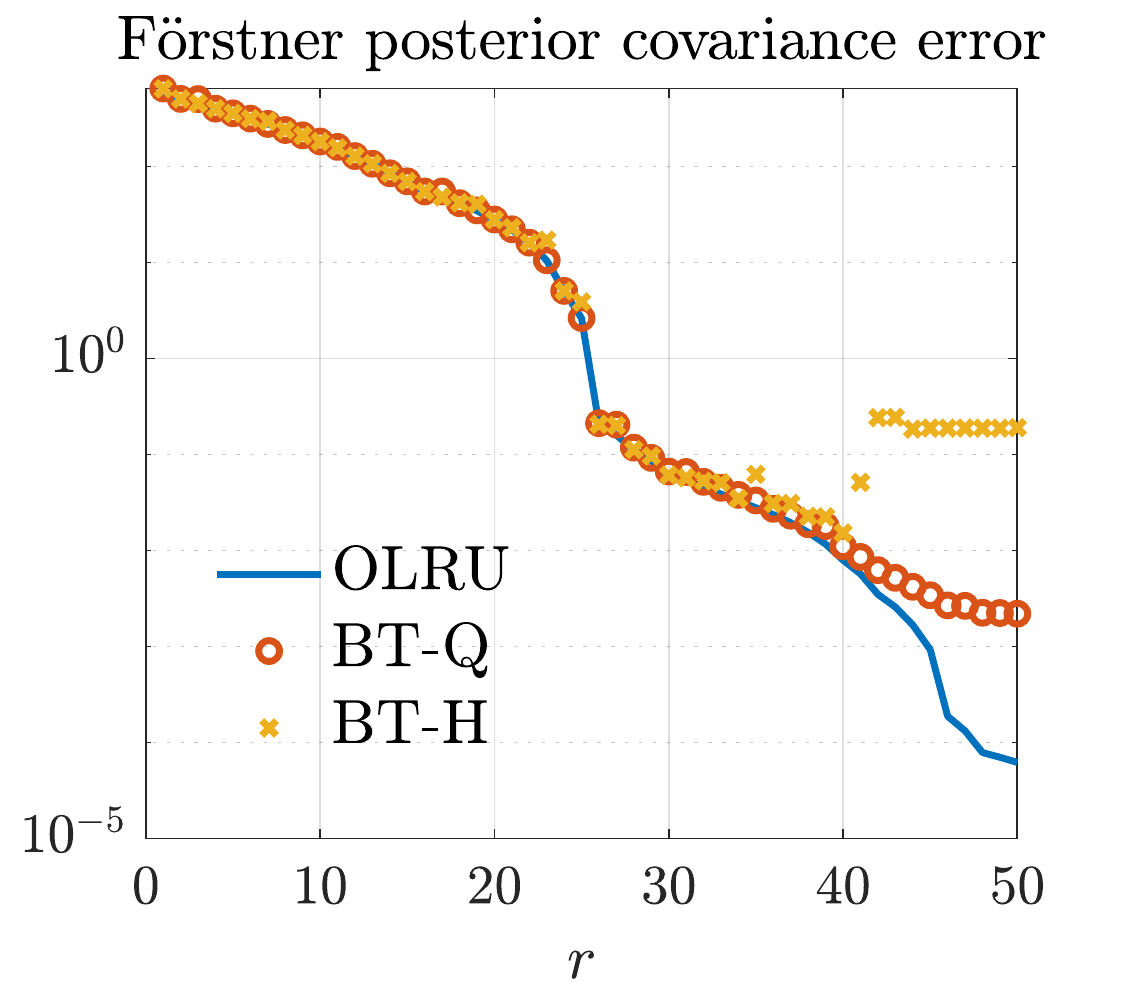}
    \caption{Results for the ISS1R model with measurements spaced $h=0.1$ apart until $T=300$, for which $\Qm$ and $h\bH$ have a relative error of 1\%. The quantities plotted are as defined in Figure~\ref{fig:heat good}.
    }
    \label{fig:iss good}
\end{figure}

\begin{figure}[h]
    \centering
    \includegraphics[width=0.32\textwidth]{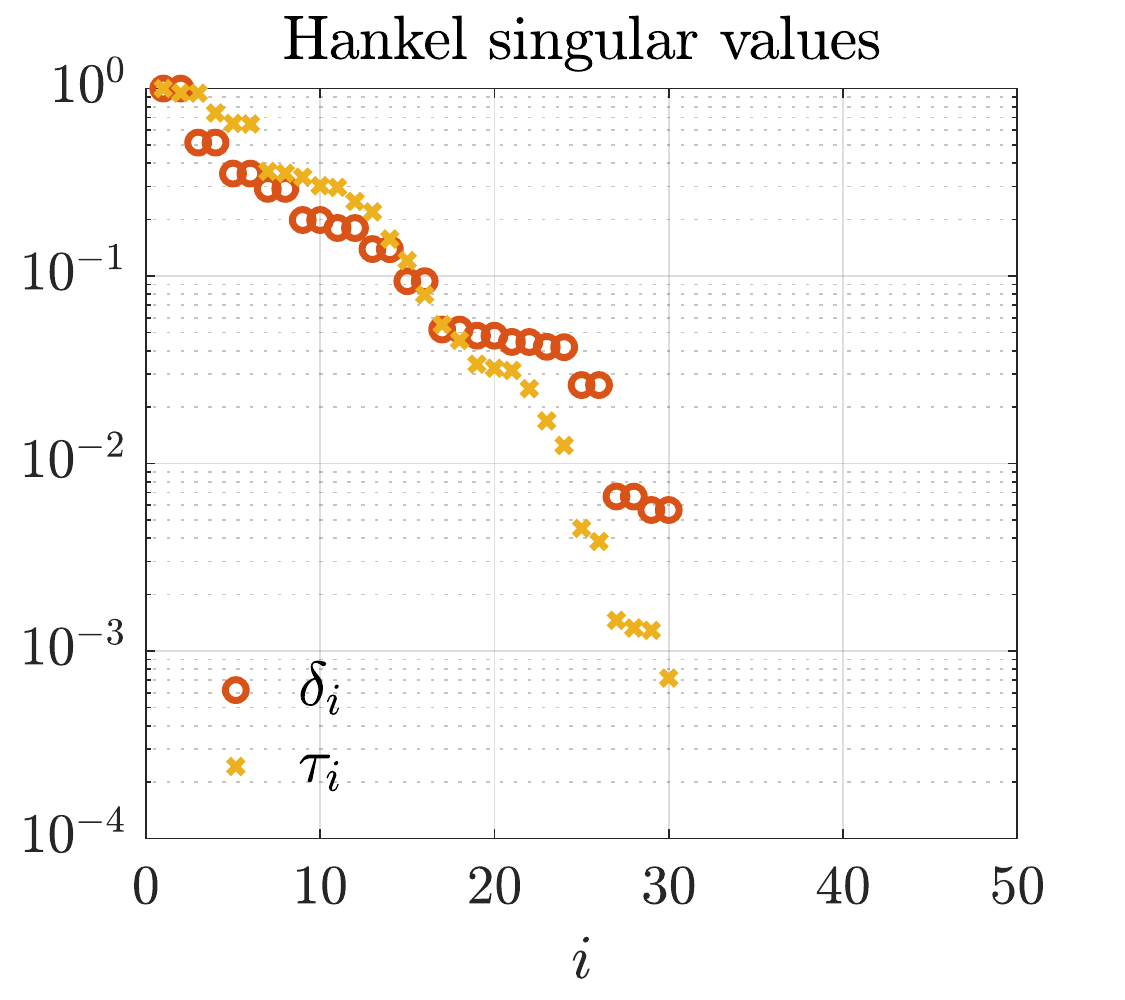}
    \hfill
    \includegraphics[width=0.32\textwidth]{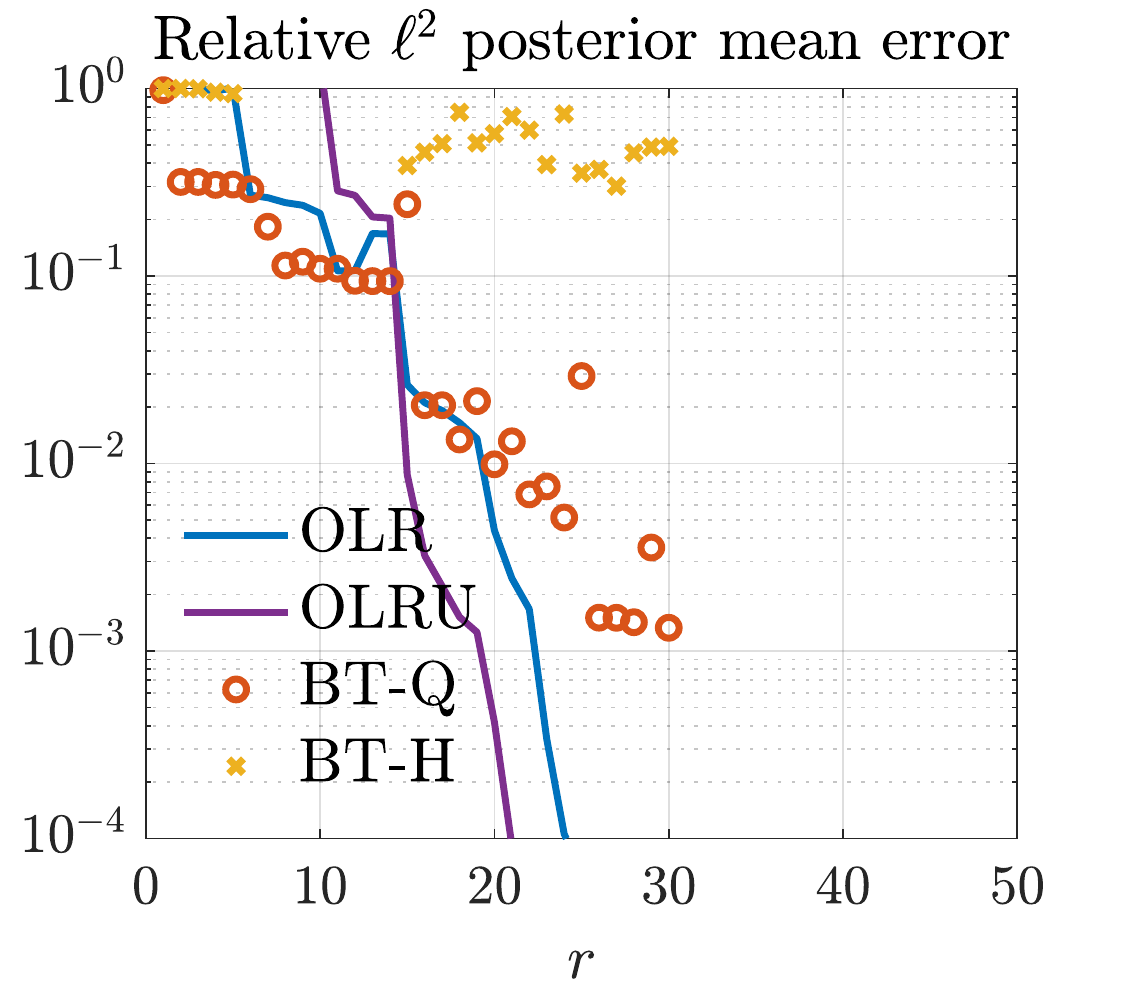}
    \hfill
    \includegraphics[width=0.32\textwidth]{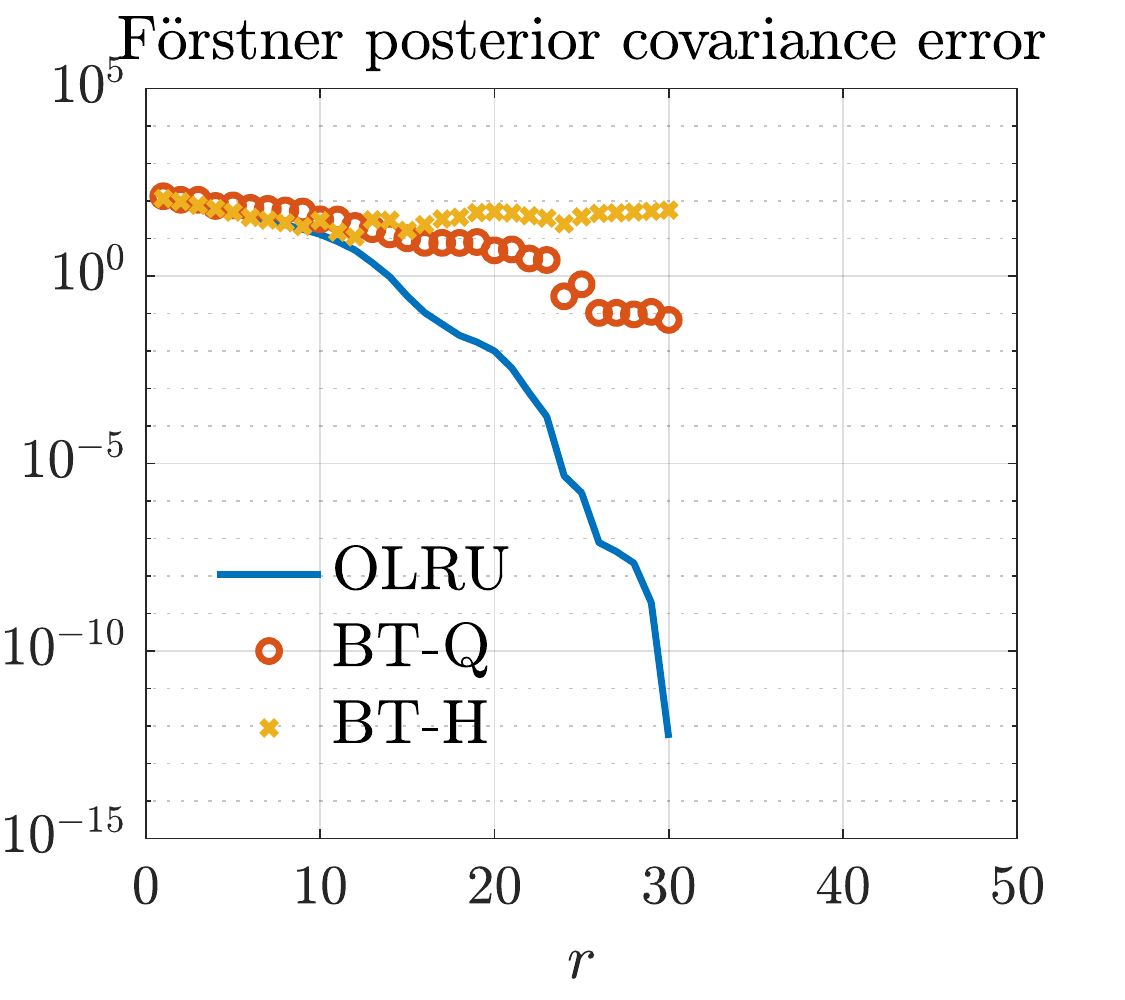}
    \caption{Results for the ISS1R model with measurements spaced $h=1$ apart until $T=10$, for which $\Qm$ and $h\bH$ have a relative error of 53\%. Because $r = 30=\nmeas\nout$, the OLR/U approximations correspond to the full posterior in this case. The quantities plotted are as defined in Figure~\ref{fig:heat good}.
    }
    \label{fig:iss bad}
\end{figure}

\subsection{Discussion}\label{ssec: numerical discussion}
In both test problems, when the observation model has 
a large number of observations, $n$, with high temporal resolution $h \bH$ is a close approximation of $\Qm$ in the Frobenius norm (\Cref{fig:heat good,fig:iss good}). In this case we observe good agreement in $\tau_i$ and $\delta_i$, the square roots of the generalized eigenvalues of the pencils $(\bH,\gprior^{-1})$ and $(\Qm,\gprior^{-1})$, respectively. The posterior covariance approximations using both BT approaches are optimal for low-to-middling values of $r$, indicating agreement not only in the generalized eigenvalues but also in the leading generalized eigendirections. At higher $r$, the balanced truncation posterior covariance approximations diverge from the OLRU approximation, indicating that the trailing eigendirections do not match as well, 
but the balanced truncation approximations exhibit respectable errors nevertheless.
In contrast, when the observation model has smaller $n$, with lower temporal resolution of observations (\Cref{fig:heat bad,fig:iss bad}) and shorter observation intervals, we observe poor agreement in $\delta_i$ and $\tau_i$ and significantly sub-optimal posterior covariance approximations from balanced truncation approaches for all but the smallest $r$-values. This indicates generally poor agreement in the generalized eigenpairs of $(\Qm,\gprior^{-1})$ and $(\bH,\gprior^{-1})$, in all but the leading few directions.

Across all test cases, the BT-Q mean error exhibits similar decay to that of the square root of the generalized pencil eigenvalues $\delta_i$ (these are the Hankel singular values of the system~\eqref{eq: modified LTI system}). We note that while it may be possible to bound the posterior mean error using system-theoretic results, the error bound of~\Cref{lem: BT Goodness} does not automatically apply to the posterior mean. Showing that this or a similar bound holds is a direction for future work. In contrast, while the BT-H mean error performs similarly to the BT-Q mean error when $h \bH$ and $\Qm$ are close; in cases where $h\bH$ is far from $\Qm$ the BT-H error is higher, significantly so in the ISS1R case of \Cref{fig:iss bad}. We note that while both the OLR/U mean approximations can achieve lower errors than the balanced truncation approach, the OLR/U mean approximations require evolving the full adjoint dynamics in the dynamical system setting, while the balanced truncation approaches provide a low-dimensional adjoint to evolve. Finally, we observe that in some cases (particularly in Figure \ref{fig:iss bad}) the BT-Q mean error is smaller than that of the OLR approach. This does not contradict the optimality results of \cite{spantini_optimal_2015}, as the OLR mean is optimal in the Bayes risk, not the $\ell^2$-norm in which we compare posterior mean approximations.

Our results show that our proposed approach of~\Cref{ssec: method} provides an efficient low-dimensional reduced model that can be used to compute posterior mean and covariance approximations that are near-optimal when the observation model allows the $(\bH,\gprior^{-1})$ eigenpairs to closely approximate the $(\Qm,\gprior^{-1})$ eigenpairs. However, our results show that even when these eigenpairs are far apart, the BT-Q approach of \Cref{ssec: method} provides approximations with reasonably low (albeit sub-optimal) error.

\section{Conclusions}\label{sec: conclusions}
We consider the linear Gaussian Bayesian inverse problem for inferring the initial condition of a high-dimensional linear dynamical system from noisy measurements taken at $t>0$. In this setting, we propose a new model reduction method that exploits natural connections between the system-theoretic method of balanced truncation and the optimal posterior approximation work of~\cite{spantini_optimal_2015}. To adapt balanced truncation to the inference setting, we define new Gramians for the inference context. We introduce a notion of \emph{prior compatibility} that allows the prior covariance to be interpreted as an infinite reachability Gramian, and we propose two ways of constructing a compatible prior: (i) as the stationary covariance of a system that has been spun-up for $t<0$ by white noise and (ii) via a modification to an incompatible prior. We also define a \emph{noisy observability Gramian} which measures observability energy in a way that accounts for the additive noise in the inference measurements and show that this Gramian coincides with the (scaled) Fisher information matrix in two limits: (i) the limit of continuous observations and (ii) the expectation of countably infinite Poisson-distributed measurement times.

Our proposed approach uses the prior covariance and noisy observability Gramian as inference Gramians in the method of balanced truncation to obtain a reduced model that inherits system-theoretic stability and error guarantees. Additionally, when the dominant generalized eigenpairs of the pencil defined by these inference Gramians match those of the pencil defined by the Fisher information matrix and the prior covariance, our reduced model can additionally recover the optimal posterior covariance approximation of~\cite{spantini_optimal_2015}. Our model yields a reduced dynamical system which can be cheaply evolved to approximate the full system: in this way, our approach extends the work of~\cite{spantini_optimal_2015}, which does not automatically yield a reduced dynamical system. However, the posterior mean approximations considered in \cite{spantini_optimal_2015} and in our approach are different in general. 

Numerical experiments on two benchmark problems from the model reduction community demonstrate that our proposed approach yields near-optimal posterior mean \emph{and} covariance approximations when the noisy observability Gramian is close to the scaled Fisher information matrix.  We demonstrate theoretically and by numerical examples that these matrices are close when the observations are of suitable frequency and duration. When these two matrices differ significantly, the proposed approach yields sub-optimal posterior mean and covariance approximations that nevertheless exhibit low errors that may be acceptable in some applications. 

The connections between linear Gaussian Bayesian inference problems and model reduction via balanced truncation which we have established here are intended as a potential means for transferring perspectives between the two research communities.
Some open questions about the proposed method are directions for future work, including: how to effectively leverage the proposed reduced model for computational gains in data assimilation applications; analysis of the posterior mean approximation; and extension to nonlinear settings. Another natural extension of the proposed approach is to Bayesian inference of the initial state of a system driven stochastically for all time, including through the observation period, which is observed at possibly countably infinite and possibly random times. See \Cref{app:stochalltime} for a discussion of this problem.

\appendix

\section{Proof of \Cref{prop:statcov}}\label{app:proof}
\begin{proof}
The equation~\eqref{eq: white noise dyn sys} can be solved by integrating factors in the standard way because the stochastic product rule applied to $\difd(\expe^{\bA t}x)$ produces no corrections:  
\begin{equation*}
    \bx(t)=\int_{-\infty}^t \expe^{\bA(t-s)} \bB \, \difd \bW (s)
\end{equation*}
With the Ito isometry, which is equivalent to the formal rule $ \E[\difd \bW (s) \difd \bW(s')] = I \delta (s-s') \,  \difd s \, \difd s'$, we compute:
\begin{align}
    \E[\bx(t)\bx^\top(t)] &=
\int_{-\infty}^t \expe^{\bA(t-s)} \bB (\expe^{\bA(t-s)}\bB)^\top \, \difd s
= \int_{-\infty}^t
\expe^{\bA(t-s)} \bB \bB^\top  \expe^{\bA^\top(t-s)} \, \difd s \nonumber \\
&= \int_0^\infty
\expe^{\bA s} \bB \bB^\top  \expe^{\bA^\top s} \, \difd s,
\end{align}
which also shows that $\difd \E[\bx \bx^\top] = 0$. The method of moments gives:
\begin{align} 
    d(\bx\bx^\top) &= (\bA\bx \,\difd t + \bB\, \difd \bW(t))x^\top + \bx( \bx^\top \bA^\top + \difd\bW^\top(t) \bB^\top) + \bB\,\difd\bW(t) \difd\bW(t)^\top \bB^\top
    \nonumber \\
    &= \bA \bx\bx^\top \, \difd t + \bx\bx^\top \bA^\top \, \difd t + \bB\,\difd\bW(t) \bx^\top + \bx \,\difd\bW(t)^\top \bB^\top + \bB\bB^\top \, \difd t \label{eq:statcov_MMalternative}
\end{align}
  Taking expectations and using $\E[\difd\bW x]=0$, we recover the desired Lyapunov equation \eqref{eq:statcov_MMContrGram}:
\begin{align*}
    \difd\E[\bx\bx^\top] = 0 = (\bA \E[\bx\bx^\top] + \E[\bx\bx^\top] \bA^\top + \bB\bB^\top) \, \difd t.  
\end{align*}
\hfill$\Box$
\end{proof}

\section{Optimality Property of Mahalanobis Distance for Gaussian Distributions}
\label{sec:chrishomework}
{
\begin{proposition}
The neighborhood $  \Mset \equiv \{\bz \in \mathbb{R}^d:  D\left(\bz, \cN( \bmu,\Cov)\right) < \delta \}$  defined by the Mahalanobis distance~\eqref{eq: mahalanobis distance} satisfies the following optimality condition:
\begin{equation*}
\Mset = \argmax_{S \in \mathcal{O}: \lambda(S)=\lambda(\Mset)} \Pm(S).
\end{equation*}
where $ \Pm $ denotes the probability measure on $ \mathbb{R}^d$
corresponding to a multivariate Gaussian distribution with  mean $ \bmu \in \mathbb{R}^d$ and covariance matrix $ \Cov \in \mathbb{R}^{d\times d}$,   $\mathcal{O} $ is the collection of open sets in $ \mathbb{R}^d$ and $ \lambda(S)$ denotes the Lebesgue measure of set $S$.
\end{proposition}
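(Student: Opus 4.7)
The plan is to exploit the fact that the multivariate Gaussian density is a strictly monotonically decreasing function of the Mahalanobis distance, so that the Mahalanobis ball $\Mset$ coincides with a super-level set of the density. Concretely, writing the density as $p(\bz) = (2\pi)^{-d/2}(\det\Cov)^{-1/2} \exp\bigl(-\tfrac{1}{2} D(\bz,\cN(\bmu,\Cov))^2\bigr)$, the condition $D(\bz,\cN(\bmu,\Cov)) < \delta$ is equivalent to $p(\bz) > c$ for the threshold $c = (2\pi)^{-d/2}(\det\Cov)^{-1/2} \exp(-\delta^2/2)$. This reduction turns the stated geometric optimality problem into the standard Neyman–Pearson type statement that super-level sets of a density maximize probability subject to a fixed volume constraint.

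Given this reduction, I would carry out the standard symmetric-difference argument. Let $S$ be any open set with $\lambda(S) = \lambda(\Mset)$. Decompose $\Mset = (\Mset\cap S) \cup (\Mset\setminus S)$ and $S = (\Mset\cap S) \cup (S\setminus \Mset)$; additivity of Lebesgue measure gives $\lambda(\Mset\setminus S) = \lambda(S\setminus \Mset)$. On $\Mset\setminus S$ the integrand satisfies $p(\bz) > c$, while on $S\setminus \Mset$ it satisfies $p(\bz) \leq c$. Therefore
\begin{align*}
\Pm(\Mset) - \Pm(S) &= \int_{\Mset\setminus S} p(\bz)\, \difd\bz - \int_{S\setminus \Mset} p(\bz)\, \difd\bz \\
&\geq c\, \lambda(\Mset\setminus S) - c\, \lambda(S\setminus \Mset) = 0,
\end{align*}
which yields $\Pm(\Mset) \geq \Pm(S)$ and hence the desired optimality.

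There is essentially no main obstacle: the argument is elementary once the density/level-set identification is made. The only minor care points are ensuring that the threshold $c$ is chosen consistently with the strict-inequality definition of $\Mset$ (so that $\Mset$ is genuinely the open super-level set $\{p>c\}$), and observing that the result still holds on the class $\mathcal{O}$ of open sets because the strict/non-strict distinction on the boundary $\{p=c\}$ is Lebesgue-negligible when $\Cov$ is positive definite (the boundary is an ellipsoid of dimension $d-1$). If desired, one can promote $\geq$ to $>$ whenever $S\neq \Mset$ up to a null set, by noting that the inequality $p(\bz)>c$ is strict on $\Mset\setminus S$.
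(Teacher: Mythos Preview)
Your proposal is correct and follows essentially the same approach as the paper: both identify $\Mset$ with a super-level set of the Gaussian density and then compare $\Pm(\Mset)$ and $\Pm(S)$ via the symmetric-difference decomposition $\Mset\setminus S$ and $S\setminus\Mset$, using equality of their Lebesgue measures together with the density threshold. Your version is somewhat more explicit (naming the threshold $c$ and framing it as a Neyman--Pearson argument), but the underlying idea is identical.
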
}

\begin{proof}
{$ M$ can be verified by definition to correspond to an ellipsoidal level set of the multivariate Gaussian density, with the probability density strictly greater on the interior of $M $ than on the complement.  Consequently, for any $ S \in \mathcal{O}$, the probability density on $ S \setminus M$ is strictly greater than the probability density on $ M \setminus S$.  If $ M$ and $ S $ have the same Lebesgue measure, then so do $ S \setminus M$ and $M \setminus S$.  Thus $ P(S \setminus M) \leq P(M\setminus S)$, with equality holding only if $ P(S \setminus M)=0 $, which can only happen if $ S=M$ since both sets are open.} \hfill$\Box$
\end{proof}

\section{Stochastic Forcing Continuing Through Positive Times}
\label{app:stochalltime}

We consider here how the connection between balanced truncation and the optimal low-rank posterior update procedure described in~\cite{spantini_optimal_2015} is affected if we were to consider the more natural case of a stochastic linear dynamical system driving consistently for both positive and negative times:
\begin{align}
    \difd \bx =
        \bA \bx \, \difd t + \bB \,\difd \bW(t), \qquad -\infty < t < \infty.
    \label{eq: white noise dyn sys all}
\end{align}

Now the observations involve noise not only from the measurement process, but from the dynamical noise at previous times.  We can write in place of Eq.~\eqref{eq: measurements}:
\begin{align*}
\bm_i &= \bC \expe^{\bA t_i} \bxin + \bXi_i
, \\
\end{align*}
where
\begin{equation*}
    \bXi_i=
\int_0^{t_i} \bC \expe^{\bA (t_i-s)} \bB \, d \bW (s) + \beps_i.
\end{equation*}
In the expression of the  abstract linear Bayesian inference problem:
\begin{equation*}
\bm = \bG \bxin + \bXi,
\end{equation*}
the noisy component $ \bXi \in \R^{\nobs}$ of the measurements is now correlated across observation times.  Thus, $\bXi \sim \cN(0,\gobsdyn)$ where $\gobsdyn\in\R^{\nobs\times \nobs}$ has block structure:
\begin{align}
    \gobs = \begin{bmatrix}
        \bGamma_{\beps_1,\beps_1} & \bGamma_{\beps_1,\beps_2} & \cdots & \bGamma_{\beps_1,\beps_{\nmeas}} \\
         \bGamma_{\beps_2,\beps_1} & \bGamma_{\beps_2,\beps_2} & \cdots & \bGamma_{\beps_2,\beps_{\nmeas}} \\
        \vdots & \vdots & \cdots & \vdots \\
         \bGamma_{\beps_{\nmeas},\beps_1} & \bGamma_{\beps_{\nmeas},\beps_2} & \cdots & \bGamma_{\beps_{\nmeas},\beps_{\nmeas}} 
    \end{bmatrix}.
    \label{shadeq: block measurement covariance}
\end{align}
with covariances of measurements at different times given explicitly by;
\begin{equation*}
    \bGamma_{\beps_i,\beps_j} = \gmi \delta_{ij}+ \int_{|t_i-t_j|}^{\max(t_i,t_j)}  \bC\expe^{\bA s} BB^\top \expe^{\bA^\top s}\bC^\top \, \difd s
\end{equation*}
with Kronecker delta function $ \delta_{ij}$.  The reason for the correlations is the influence of the common past driving noise.  These correlations are indeed mitigated if the observation times are spaced sufficiently far apart so the minimal eigenvalue of $ -A |t_i-t_j|$ becomes large, but this is not a terribly interesting case as it would imply also weak dependence of the observations on the state $ \bxin$.  The correlation in the effective measurement noise would lead to a very different expression for the Fisher information matrix in~Eq.~\eqref{eq: fisher info sum}), as $ \gobs^{-1}$ would no longer be block diagonal, and we would thereby lose interpretability of the Fisher information matrix as a discretized version of an observability Gramian.

\section{Modification of covariance to induce prior-compatibility}
\label{app:ModPrior}
This appendix complements the discussion in \Cref{sssec: compatibility modification}: We provide below a Matlab routine that, given a system matrix \texttt{A} and a prior covariance \texttt{Gamma0}, returns the upper-triangular Cholesky factor \texttt{R\_Gam} of a modified prior that is compatible with the state dynamics of \texttt{A}.
Our \textsc{matlab} function listed below utilizes the \texttt{lyapchol} routine from the Control System Toolbox to solve \eqref{eq:Deltalyap} for the Cholesky factor \texttt{E} of the prior modification $\Delta$.

{\scriptsize
\begin{verbatim}
function R_Gam=minmodPriorCompat(Gamma0,A)
% Find a minimal residual modification producing 
%   a modified Gamma0 that is prior-compatible with A
% Input: Gamma0 - initial covariance matrix (n x n positive def matrix)
%         A - dynamic system matrix (n x n strictly stable matrix)
% Returns:  R_Gam - the Cholesky factor of a modification of Gamma0 
%   making it compatible with A:  Gamma1 = R_Gam'*R_Gam = Gamma0 + E'*E  
%   such that  A*Gamma1+Gamma1*A' is the *nearest* negative semidefinite 
%   matrix to A*Gamma0+Gamma0*A' - Note that E'*E  is *not* the minimal 
%   perturbation to Gamma0 required for prior-compatibility.
%
% (Requires lyapchol from control system toolbox). 
% 
R0=chol(Gamma0); 
M0=A*Gamma0+Gamma0*A'; 
[V,D]=eig(M0); eigval=real(diag(D));
flagPos=(eigval>0); 
% If Gamma0 is prior-compatible then flagPos is empty
if isempty(flagPos)
    R_Gam=R0;
else
    d_pos=sqrt(eigval(flagPos));
    V_pos=V(:,flagPos);
    E = lyapchol(A,V_pos*diag(d_pos));
    [~,R_Gam]=qr([R0;E],0);
end    
\end{verbatim}
}

\bibliographystyle{spmpsci}      
\bibliography{refs}
\end{document}